\DeclareMathAlphabet{\mathpzc}{OT1}{pzc}{m}{it}
\newtheorem{theorem}{Theorem}[section]
\newtheorem{lemma}[theorem]{Lemma}
\newtheorem{proposition}[theorem]{Proposition}
\newtheorem{assumption}[theorem]{Assumption}
\newtheorem{result}[theorem]{Result}
\newenvironment{proof}[1][Proof.]{\begin{trivlist}
\item[\hskip \labelsep {\bfseries #1}]}{\end{trivlist}}
\newenvironment{definition}[1][Definition.]{\begin{trivlist}
\item[\hskip \labelsep {\bfseries #1}]}{\end{trivlist}}
\newenvironment{remark}[1][Remark.]{\begin{trivlist}
\item[\hskip \labelsep {\bfseries #1}]}{\end{trivlist}}
\newcommand{\qed}{\nobreak \ifvmode \relax \else
      \ifdim\lastskip<1.5em \hskip-\lastskip
      \hskip1.5em plus0em minus0.5em \fi \nobreak
      \vrule height0.75em width0.5em depth0.25em\fi}
\begin{document}

\title{Is Quantum Field Theory ontologically interpretable? \\ On localization, particles and fields in\\ relativistic Quantum Theory}
\author{David Schroeren\\ Wolfson College, University of Cambridge\footnote{This essay was submitted as part of required work for the Master of Advanced Study in Mathematics (Part III of the Mathematical Tripos).}\\
\footnotesize Supervisor: Dr. Jeremy Butterfield}

\onehalfspacing
\maketitle
\thispagestyle{empty}
\newpage
\tableofcontents
\newpage
\section{Introduction}
Quantum field theory (QFT) is ubiquitous in particle physics and condensed matter physics. Most modern theories in quantum physics, including the Standard Model, BCS theory and superconductivity, are formulated as quantum field theories - which have been found to provide predictions of unparalleled accuracy. While the success of quantum field theory surpasses even that of standard quantum mechanics, the latter has been subject to intense scrutiny over more than half a century by physicists, philosophers and mathematicians alike. Although, in the eyes of scientific realists, the foundational problems of fixed-number quantum mechanics are by no means resolved, its mathematical framework is largely understood. This is not true for QFT: many concepts used in heuristic QFT are mathematically ill-defined, while the well-defined framework of algebraic QFT is not straighforwardly extendible to interacting systems. Accordingly, the task of interpreting relativistic QFT is somewhat complicated. There are two ways that have been put forward to provide an ontology for QFT. Some have suggested that, in accordance with our everyday sense-experience of localized macroscopic objects and events, \emph{particles} should be fundamental to the theory. However, others favour a \emph{field} ontology for quantum \emph{field} theory. And since both views face difficulties, it has questioned whether QFT is ontologically interpretable at all. In this essay, I attempt an answer to this question, examining ways in which one might be able to equip QFT with an explicit realist ontology by taking a closer look at the ontological status of particles, localization and fields in QFT.\\ 
\\
I shall proceed as follows: in Section 1, I will start by setting up the algebraic framework used to describe the foundations of non-interacting QFT and elucidate its relation to the physical content of the familiar Hilbert-space formalism. In Section 2, I will provide a formal set of assumptions about localization and give a criterion by which to identify whether or not a particle concept is \emph{fundamental} to QFT. Then, I will construct a na\"{i}ve approach to localization for the free bosonic case (Section 3) and show how this localization scheme, as a consequence of the Reeh-Schlieder theorem, fails to satisfy this \emph{fundamentality} criterion (Section 3). This will lead to the Newton-Wigner localization scheme, which I report in Section 4. The NW concept of localization, at first sight, seems to alleviate the problems of na\"{i}ve localization. However, as I will show, NW style localization only superficially circumvents the Reeh-Schlieder theorem, in that it fails to obey strong microcausality. I will then examine attempts to generalise Newton-Wigner localization in such a way that it pertains to a localized property of a system, such as the center of energy (CE) or the center of mass (CM) and argue that they are not satisfactory, which will lead to my intermediate conclusion: QFT does not permit an ontology contingent on the localization of a particle or a property of a system in regions of space or spacetime. In Section 5, I will turn my attention to fields and explain how particles can be regarded as an emergent phenomenon of a relativistic field theory. Finally, I shall conclude that since QFT does not lend itself to a particle ontology, we need to interpret QFT in terms of \emph{fundamental fields} and regard particle observation events as emergent phenomena. 

\section{Mathematical Preliminaries}\label{mathprelims}
In this Section, I will provide a minimal set of mathematical tools necessary for the algebraic approach used throughout this essay. To construct algebraic quantum field theory, I will begin by introducing abstract algebras.\footnote{Following Halvorson in \cite{HalvorsonMueger2007}.}
\begin{definition}
A \emph{C*-algebra} is a pair consisting of a *-algebra $\mathfrak{A}$ and a norm $\|\cdot \|: \mathfrak{A} \rightarrow \mathbb{R}$ such that $\mathfrak{A}$ is complete with respect to $\|\cdot \|$ and
\[\|AB\| \leq \|A\|\|B\|,\qquad \qquad \|A^*A\| = \|A\|^2\]
for all $A, B \in \mathfrak{A}$. Let $\mathfrak{A}$ denote the algebra equipped with its norm. Also, I will only consider C*-algebras that contain the multiplicative identity $I$, viz. \emph{unital} C*-algebras.
\end{definition}
In other words: A C*-algebra $\mathfrak{A}$ is a Banach-algebra  over the complex numbers, together with an involution $*: \mathfrak{A} \rightarrow \mathfrak{A}$, which takes the adjoint of elements in the algebra. In any quantum theory, we would like to evaluate an operator with respect to some quantum system. Thus, we define a state on an algebra.
\begin{definition}
A \emph{state} on an algebra $\mathfrak{A}$ is a linear functional $\omega: \mathfrak{A} \rightarrow \mathbb{C}$ such that 
\begin{enumerate}
	\item $\omega(A^*A) \geq 0, \forall A \in \mathfrak{A}$
	\item $\omega(I) = 1.$
\end{enumerate}
\end{definition}
\begin{definition}
A state $\omega$ on $\mathfrak{A}$ is \emph{mixed} if $\omega$ is a mixture of at least two other states, i.e. if $\omega$ can be expressed as  $\omega = \lambda \omega_1 + (1-\lambda_2)\omega_2$ for $\omega_1 \neq \omega_2$ and $\lambda \in (0,1)$. Otherwise, $\omega$ is \emph{pure}.
\end{definition}%for finite $n \in \mathbb{N}$
In other words, a state is mixed if it is a nontrivial convex combination of other states on $\mathfrak{A}$.\\
Note that I have not talked about any Hilbert space yet! In a few moments, we shall see how this relates to the familiar notion of a statevector in a Hilbert space. But first, let me give a few more essential definitions that might help readers familiar with the usual approach to QFT make more sense of C*-algebras.
\begin{definition}
Let $\mathcal{H}$ be a Hilbert space. Let $\mathfrak{B}(\mathcal{H})$ be the C*-algebra of bounded linear operators on $\mathcal{H}$, i.e. $\mathfrak{B}(\mathcal{H})$ is a C*-algebra satisfying
\begin{enumerate}
	\item $\forall A \in \mathfrak{B}(\mathcal{H})$ $\exists$ a smallest number $\|A\|$ such that $\langle Ax, Ax \rangle^{1/2} \leq \|A\|$ for every unit vector $x \in \mathcal{H}$
	\item $\forall A, B \in \mathfrak{B}(\mathcal{H})$ let $AB$ denote the binary composition of two elements of $\mathfrak{B}(\mathcal{H})$
	\item $\forall A \in \mathfrak{B}(\mathcal{H})$ let $A^*$ denote the unique element such that $\langle A^*x,y \rangle = \langle x,Ay\rangle $ for all $x, y \in \mathcal{H}$
\end{enumerate}
\end{definition}
In particular, we observe that $\mathfrak{B}(\mathcal{H})$ contains the observables (i.e. the self-adjoint operators) as a subset.\\
In the context of localization, I will be talking about subalgebras of $\mathfrak{B}(\mathcal{H})$. We would like these subalgebras to obey a number of nice properties as given by the following:
\begin{definition}
Let $\mathfrak{R}$ be a *-subalgebra of $\mathfrak{B}(\mathcal{H})$. Then $\mathfrak{R}$ is a \emph{von Neumann algebra} if
\begin{enumerate}
	\item $I \in \mathfrak{R}$
	\item $(\mathfrak{R}')' = \mathfrak{R}$
\end{enumerate}
where $\mathfrak{R}' = \lbrace A\in \mathfrak{B}(\mathcal{H}): [A,C]=0,  \forall C\in \mathfrak{R}\rbrace$ is the \emph{commutant} of $\mathfrak{R}$.
\end{definition}
\begin{remark}
According to von Neumann's \emph{double commutant theorem}, condition (2) in the above definition is equivalent to the statement that the unital *-algebra $\mathfrak{R}$ is weakly closed, i.e. $\mathfrak{R}$ is closed in the weak operator topology.\footnote{The weak operator topology is defined in terms of a family $\{ \langle u, Av\rangle : u, v \in \mathcal{H}\}$ so that a net $\{A_i\}_{i\in\mathcal{I}}$ converges \emph{weakly} to $A$ if the sequence $\{\langle u, A_i v\rangle \}_{i\in \mathcal{I}}$ converges to $\langle u, Av\rangle$ for all $u,v \in \mathcal{H}$; see \cite{HalvorsonMueger2007} for details.}
\end{remark}
How do we connect an abstract C*-algebra to the algebra of bounded linear operators on a Hilbert space? 
\begin{definition}
Let $\mathfrak{A}$ be a C*-algebra. A \emph{representation} of $\mathfrak{A}$ is a pair $(\pi,\mathcal{H_{\pi}})$ where $\mathcal{H_{\pi}}$ is a Hilbert space and $\pi$ is a homomorphism from $\mathfrak{A}$ into $\mathfrak{B}(\mathcal{H_{\pi}})$ that obeys $\pi(A^*) = \pi^*(A), \forall A\in \mathfrak{A}$. Additionally, a representation $(\pi, \mathcal{H_{\pi}})$ is said to be 
\begin{enumerate}
	\item \emph{irreducible} if $\pi(\mathfrak{A})$ leaves no subspace of $\mathcal{H_{\pi}}$ invariant;
	\item \emph{faithful} if $\pi$ is an isomorphism;
	\item \emph{factorial} if $\pi(\mathfrak{A})$ is a \emph{factor}, i.e. if the centre of $\pi(\mathfrak{A})$ contains only multiples of the identity.
\end{enumerate}
\end{definition}
\begin{remark}
In general, $\pi(\mathfrak{A}) \neq \pi(\mathfrak{A})''$. If this is the case however, it follows that $\pi(\mathfrak{A})''$ definitely \emph{is} a von Neumann algebra and we observe the following: the irreducibility of a representation $(\pi, \mathcal{H_{\pi}})$ is equivalent to the statement that $\pi(\mathfrak{A})'' = \mathfrak{B}(\mathcal{H_{\pi}})$. Since $\mathfrak{B}(\mathcal{H_{\pi}})$ is a factor, we can deduce that if a representation is irreducible, it is also factorial.
\end{remark}
\begin{definition}
Let $(\pi, \mathcal{H_{\pi}})$ and $(\phi, \mathcal{H_{\phi}})$ be representations of a C*-algebra $\mathfrak{A}$. $(\pi, \mathcal{H_{\pi}})$ and $(\phi, \mathcal{H_{\phi}})$ are said to be
\begin{enumerate}
	\item \emph{unitarily equivalent} if there exists a unitary mapping $U:\mathcal{H_{\pi}} \rightarrow \mathcal{H_{\phi}}$ such that \\ 
	$U^{-1}\phi(A)U = \pi(A)$ for all $A\in \mathfrak{A}$;
	\item \emph{quasiequivalent} if there exists a *-isomorphism between the von Neumann algebras $\pi(\mathfrak{A})''$ and $\phi(\mathfrak{A})''$;
	\item \emph{disjoint} if they are not quasiequivalent.
\end{enumerate}
\end{definition}
Clearly, if two representations are unitarily equivalent, they are also quasiequivalent.
\begin{definition}
Let $\mathfrak{R}\subseteq \mathfrak{B}(\mathcal{H})$. A vector $x\in \mathcal{H}$ is \emph{cyclic} with respect to $\mathfrak{R}$ if the closed linear span of the set $\{ Ax:A\in \mathfrak{R}\}$ is the whole of $\mathcal{H}$, that is
\begin{equation}\text{span}_{\mathbb{C}}\overline{\{ Ax:A\in \mathfrak{R}\}} = \mathcal{H}.
\end{equation}
A vector $x\in \mathcal{H}$ is said to be \emph{separating} for $\mathfrak{R}$ if $Ax=0$ and $A\in \mathfrak{R}$ entails $A=0$.
\end{definition}
This just means that if a vector x is cyclic with respect to a subset $\mathfrak{R}$ of $\mathfrak{B}(\mathcal{H})$, we can span the whole Hilbert space $\mathcal{H}$ by actions of $\mathfrak{R}$ on x. Trivially, x is a cyclic vector for $\mathfrak{B}(\mathcal{H})$, the algebra of bounded linear operators on $\mathcal{H}$.\\
At last, we can set out to relate our state on an algebra to our familiar Hilbert space vector formalism via the Gelfand-Naimark-Segal (GNS) theorem. It shows that every C*-algebraic state can be represented by a vector in a Hilbert space.
\begin{theorem} \emph{(GNS)}\\
Let $\omega$ be a state on $\mathfrak{A}$. Then there exists a representation $(\pi, \mathcal{H_{\pi}})$ of $\mathfrak{A}$, and a unit vector $x\in \mathcal{H}$ such that
\begin{enumerate}
	\item $x$ is cyclic with respect to $\pi(\mathfrak{A})$;
	\item $\omega(A) = \langle x, \pi(A)x \rangle $, for all $A\in \mathfrak{A}$.
\end{enumerate}
Additionally, the representation $(\pi, \mathcal{H_{\pi}})$ satisfying (1) and (2) is unique up to unitary equivalence.
\label{gns}
\end{theorem}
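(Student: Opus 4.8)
The plan is to construct the Hilbert space, the representation and the cyclic vector directly from the algebraic data carried by $\omega$, and then settle uniqueness separately. First I would use the state to manufacture an inner product. Define a sesquilinear form on $\mathfrak{A}$ by $\langle A, B\rangle_\omega = \omega(A^*B)$; positivity of $\omega$ (axiom 1) gives $\langle A, A\rangle_\omega \geq 0$, so this is a positive semi-definite form and the Cauchy-Schwarz inequality $|\omega(A^*B)|^2 \leq \omega(A^*A)\,\omega(B^*B)$ holds. Since the form may be degenerate, I would pass to the set $\mathfrak{N}_\omega = \{A \in \mathfrak{A} : \omega(A^*A) = 0\}$ of null vectors. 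Using Cauchy-Schwarz one checks that $\mathfrak{N}_\omega$ is a left ideal: it is a subspace (the cross terms $\omega(A^*B)$ vanish by Cauchy-Schwarz), and if $A \in \mathfrak{N}_\omega$ then $\omega((BA)^*(BA)) = \omega(A^* B^* B A) = \langle A, B^*BA\rangle_\omega$ vanishes, again by Cauchy-Schwarz, for every $B \in \mathfrak{A}$. The quotient $\mathfrak{A}/\mathfrak{N}_\omega$ then carries a genuine (positive definite) inner product, and I would take $\mathcal{H}_\pi$ to be its completion.

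Next I would build the representation by left multiplication. Writing $[A]$ for the class of $A$ in $\mathfrak{A}/\mathfrak{N}_\omega$, set $\pi(A)[B] = [AB]$. Well-definedness is precisely the left-ideal property of $\mathfrak{N}_\omega$ established above. The crucial analytic step is boundedness: I must show $\|\pi(A)[B]\|^2 = \omega(B^* A^* A B) \leq \|A\|^2\,\omega(B^*B)$, so that $\pi(A)$ extends continuously to the completion $\mathcal{H}_\pi$. This is the point where the C*-structure enters essentially: the element $\|A\|^2 I - A^*A$ is positive in $\mathfrak{A}$ (its spectrum lies in $[0,\|A\|^2]$ because $\|A^*A\| = \|A\|^2$), hence of the form $C^*C$, so that $B^*(\|A\|^2 I - A^*A)B = (CB)^*(CB)$ and applying $\omega$ yields the bound. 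It is then routine to verify that $\pi$ is linear, multiplicative and $*$-preserving. For the cyclic vector I would take $x = [I]$: then $\|x\|^2 = \omega(I^*I) = \omega(I) = 1$, so $x$ is a unit vector; cyclicity holds because $\pi(A)x = [A]$, so $\pi(\mathfrak{A})x = \{[A] : A \in \mathfrak{A}\}$ is dense in $\mathcal{H}_\pi$ by construction; and $\langle x, \pi(A)x\rangle = \omega(I^* A I) = \omega(A)$, which is property (2).

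Finally, for uniqueness up to unitary equivalence, suppose $(\phi, \mathcal{H}_\phi)$ with cyclic unit vector $y$ also reproduces $\omega$. I would define $U : \mathcal{H}_\pi \to \mathcal{H}_\phi$ on the dense subspace $\pi(\mathfrak{A})x$ by $U\pi(A)x = \phi(A)y$. The identity $\langle \phi(A)y, \phi(B)y\rangle = \omega(A^*B) = \langle \pi(A)x, \pi(B)x\rangle$, which follows from each representation reproducing $\omega$ together with the $*$-homomorphism property, shows that $U$ is a well-defined isometry with dense range; it therefore extends to a unitary, and $U\pi(A) = \phi(A)U$ is immediate from the definition. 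I expect the main obstacle to be the boundedness estimate for $\pi(A)$: it is the only step requiring the full C*-algebraic input (positivity of $\|A\|^2 I - A^*A$ and the resulting decomposition into a product $C^*C$), whereas every other step relies only on linearity, the positivity axiom for $\omega$, and elementary Hilbert-space completion.
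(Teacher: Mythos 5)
The paper states the GNS theorem without proof, citing it as a standard result of the algebraic framework, so there is no in-paper argument to compare against; your proposal is the canonical textbook GNS construction (quotient by the left ideal $\mathfrak{N}_\omega$, left-multiplication representation, cyclic vector $[I]$, intertwining unitary for uniqueness) and it is correct and complete. The only inputs you use tacitly are standard: hermiticity of the form $\omega(A^*B) = \overline{\omega(B^*A)}$ follows from positivity by polarization, and the positivity of $\|A\|^2 I - A^*A$, which you correctly identify as the crux of the boundedness estimate, rests on the C*-theorem that $A^*A \geq 0$ together with the existence of square roots via the continuous functional calculus.
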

So the C*-algebraic state is just the expectation value of the familiar Hilbert space formalism. Note that, in particular, we can represent the C*-algebra $\mathfrak{B}(\mathcal{H})$ of bounded linear operators on some Hilbert space $\mathcal{H}$ unto itself. \\
The advantage of the algebraic approach is that, provided the GNS construction, the state on an algebra does not depend on our choice of representation - even when confronted with unitarily \emph{in}equivalent representations. In the algebraic setting, the entire physical content of the theory is contained in the \emph{algebra} of operators. In the next Section, we shall see how to give life to this idea.\\

\section{Localization and the particle ontology} \label{locpartont}
With the algebraic framework in place, I will now set out to formulate what is generally meant by `localization' and give a criterion by which we can then evaluate if a relativistic quantum theory permits a particle ontology. 
\subsection{Basic assumptions about localization}
What does it mean for something to be localized? One can interpret localization both in the context of space and spacetime.\footnote{Pointlike localization could then be regarded as the limit of arbitrarily small regions.} In this essay, the latter will generally be assumed to obey \emph{global hyperbolicity}. That is, the spacetime is assumed to have a topology $\Sigma \times \mathbb{R}$ where $\Sigma$ is some Cauchy surface.\footnote{See theorem 4.1.1 (Geroch (1970); Dieckmann (1988)) in \cite{Wald1994}, p.56.} Minkowski space, which will be used for the most part of this essay, is trivially an example of a globally hyperbolic spacetime. Let me now introduce a number of general concepts for localization in Minkowski spacetime. In the following, I will loosely refer to local measurements that can be made in a region; then in subsection \ref{fundcrit}, I will give a more formal definition of local measurability.
\begin{definition}
Let $\mathcal{O}$ be the collection of bounded open subsets of Minkowski space $M$ and let $\mathcal{G}$ denote the translation group on $M$. A \emph{net} of local algebras over $M$ is a set
\begin{equation}
 \mathfrak{A}=\{\mathfrak{R}(O): O\in \mathcal{O}\}
\end{equation}
where every $\mathfrak{R}(O)$ is a von Neumann subalgebra of $\mathfrak{B}(\mathcal{H})$. Furthermore, if the net $\mathfrak{A}$ is generated by the set $\{\mathfrak{R}(O+\mathbf{x}): \mathbf{x}\in \mathcal{G}\}$ for a particular $O\subseteq M$, the net is said to obey \emph{additivity}.\footnote{In this essay, I shall use the following notation: curly lowercase letters such as $x\in M$ will denote a point in Minkowski space, while $\mathbf{x} \in \mathcal{G}$ is an element of the translation group on $M$.}
\end{definition}
\begin{remark}
The physical idea will be that a self-adjoint $A \in \mathfrak{R}(O)$ is an observable that can be measured by an operational procedure confined to $O$. I will refer to such a measurement procedure as \emph{local in $O$}.
\end{remark}
\begin{remark}
Since $M$ is a topological space, we can always find a covering $\mathcal{O} \subseteq \mathcal{T}$ of bounded open subsets of $M$ with respect to a suitable topology $\mathcal{T}$. In particular, denote the set of open \emph{double cones} in Minkowski space by
\begin{equation}
\mathcal{U} = \{D^+(x)\cap D^-(y): \quad x,y\in M\text{ and } y\in D^+(x)\},
\label{eq:doublecone}
\end{equation} 
where $D^{\pm}(x)$ is the \emph{causal future/past} of $x$. Sometimes, it will be convenient to work with double cones as a specific example of bounded open subsets of Minkowski space.
\end{remark}
Similarly, one can define a net of local algebras over $\mathbb{R}^3$ by $ \{ \mathfrak{R}(G+\mathbf{x}): G\subseteq \mathbb{R}^3 \text{ and } \textbf{x} \in \mathcal{G} \mid_{\mathbb{R}^3}\} $, which is additive with respect to translations in $\mathbb{R}^3$ for some particular choice of a single set $G\subseteq \mathbb{R}^3$.\\
\\
Now that we have an algebra local in a region $O$ of Minkowski space,\footnote{In this essay, $O$ will always denote to a subset of Minkowski space, while $G$ will always refer to a spatial set.} we can start to think about basic assumptions that any localization scheme $O\rightarrow \mathfrak{R}(O)$ would be expected to satisfy.
\begin{assumption}
\emph{(Isotony)}\\
Let $O_1, O_2$ be double cones in Minkowski space. A net $\mathfrak{A}$ of local algebras satisfies isotony just in case $O_1 \subseteq O_2$ then $\mathfrak{R}(O_1) \subseteq \mathfrak{R}(O_2)$, for $\mathfrak{R}(O_1), \mathfrak{R}(O_2)\in \mathfrak{A}$.
\label{isotony}
\end{assumption}
This is just about as innocuous as it looks. If a measurement procedure is local in a region $O_1$ that is a subset of another region $O_2$, then the measurement procedure should also be local in $O_2$.\\
Before formulating a covariance requirement for a net of local algebras in Minkowski space, let us consider the special case of a spatial net $G \rightarrow \mathfrak{R}(G)$ where $G \subseteq \mathbb{R}^3$:
\begin{assumption}
\emph{(Translation-Rotation Covariance)}\\
Let $G \subseteq \mathbb{R}^3$ be a bounded open set. A net $\mathfrak{A}$ of local algebras is said to obey translation-rotation covariance if, for all $\mathfrak{R}\in\mathfrak{A}$ and for all elements $U(R,\textbf{a})$ of a unitary representation of the Euclidean group $SO(3) \ltimes \mathbb{R}^3$ with elements $(R,\textbf{a})$,
\begin{equation}
	U^{-1}(R,\textbf{a})\mathfrak{R}(G)U(R,\textbf{a}) = \mathfrak{R}(RG+\textbf{a})
\end{equation}
where $R \in SO(3)$ are rotations about the origin and $\textbf{a}$ is a translation in $\mathbb{R}^3$.
\label{trcovariance}
\end{assumption}
In other words: A local algebra translated and rotated about the origin should be the same as an algebra local in the translated and rotated spatial set. So if we were to implement a local measurement procedure by a local projector $P_G$ that yields an eigenvalue 1 if (say) a particle is localized in $G$ and $0$ if it is not, then actions of the Euclidean group will yield a projector $P_{G'}$ such that $P_{G'}\equiv 0$ (if $G\cap G' = \emptyset$). In terms of the Hilbert space formalism: If a statevector is said to be localized in a spatial set $G$, it should be orthogonal to a statevector localized in a disjoint spatial set $G'$.\\
\\
How can we generalize this to a net $O \mapsto \mathfrak{R}(O)$, where $O$ is a double cone in Minkowski space? Since Minkowski spacetimme is an \emph{affine} space, we cannot straightforwardly use the concept of rotations about the origin. An intuitive way to accomplish this would  be to require \emph{Poincaré} covariance, viz. the covariance of any local algebra under actions of the inhomogenous Lorentz group $\mathfrak{P}$ of Lorentz transformations and spacetime translations. However, for our purposes, the following weaker assumption is sufficient:\footnote{In fact, Poincaré covariance of observables together with strong microcausality, the existence of a unitary representation of the covering group $\overline{\mathfrak{P}}$ of the Poincaré group $\mathfrak{P}$ and a unique Poincaré-invariant state on $\mathcal{H}$ as well as the spectrum condition (see \ref{spectrum}) form a subset of the Wightman axioms which are the basis of \emph{Axiomatic QFT}; cf. \cite{Haag} pp. 58.}
\begin{assumption}\emph{(Translation Covariance)}\\
Let $O \subseteq M$ be a double-cone and let $\mathcal{G}$ denote the spacetime translation group on $M$. A net $\mathfrak{A}$ of local algebras is said to obey \emph{translation covariance} if there exists a faithful and continuous representation $\mathbf{x}\rightarrow \alpha_{\mathbf{x}}$ of $\mathcal{G}$ in the automorphism group on $\mathfrak{A}$ such that 
\begin{equation}
	\alpha_{\mathbf{x}}\mathfrak{R}(O) = \mathfrak{R}(O+ \mathbf{x})
\end{equation}
for all $\mathfrak{R}\in\mathfrak{A}$ and $\mathbf{x} \in \mathcal{G}$.
\label{ttcovariance}
\end{assumption}
Now let us turn to the central relativistic assumption: No local operation may affect the outcome of another local operation at spacelike separation. In other words, it should be impossible to observe act-outcome correlations at spacelike separation. This seems appropriate since quantum field theory is constructed as a Lorentz-covariant, viz. \emph{relativistic} theory. 
\begin{assumption}
\emph{(Strong Microcausality)}\\
A net $\mathfrak{A}$ of local algebras is said to satisfy strong microcausality if, for every pair $O_1, O_2$ of spacelike-separated double cones in Minkowski space, the commutator of two algebra elements, one local in $O_1$ and the other local in $O_2$, vanishes:
\begin{equation}
[\mathfrak{R}(O_1), \mathfrak{R}(O_2)] = \{0\}.
\end{equation}
where $\mathfrak{R}(O_1),\mathfrak{R}(O_2)\in \mathfrak{A}$. 
\label{sm}
\end{assumption}
\begin{remark}
Let $O$ be a bounded open set in Minkowski space and let $D(O) = D^+(O)\cup D^-(O)$ denote the \emph{domain of dependence}\footnote{Definition extracted from Muller and Butterfield in \cite{MB1994}, p. 460} of $O$, where $D^{\pm}(O) = \bigcup_{x\in O} D^{\pm}(x)$. Then, \ref{sm} is equivalent to 
\begin{equation}
\mathfrak{R}(O') \subseteq \mathfrak{R}(O)',
\label{eq:altsm}
\end{equation}where $O' = M \setminus D(O)$ denotes the causal complement of $O$.
\end{remark}
For spatial algebras, this can be put as follows: Given a foliation of Minkowski space into a family $\Sigma_t$ of spacelike hyperplanes and two spatial sets G, G' on any hyperplane. Then, a projector $P_G$ will commute with the projector $P_{G'}$ if G and G' are disjoint, and therefore spacelike separated.\\
In this essay, I will be considering spatial localization schemes. In these cases, we can replace assumption \ref{sm} by a weaker condition:
\begin{assumption}
\emph{(Weak Microcausality)}\\
Let $\Sigma_t$ be a spacelike slice of Minkowski space at some fixed time t. A net $\mathfrak{A}= \{\mathfrak{R}(G): G\subseteq \Sigma_t \}$ of local algebras is said to satisfy weak microcausality if, for every pair $G_1, G_2 \subseteq \Sigma_t $ of spacelike separated spatial sets on $\Sigma_t$, 
\begin{equation}
[\mathfrak{R}(G_1), \mathfrak{R}(G_2)] = \{0\}
\end{equation}
\label{wm}
where $\mathfrak{R}(G_1),\mathfrak{R}(G_2)\in \mathfrak{A}$.
\end{assumption}
The latter is sometimes called ``equal-time microcausality'', while the former is occasionally referred to as ``generalised microcausality''. \\
So we have drawn up a list of criteria that any physically plausible localization scheme - either pertaining to spatial sets or regions in spacetime -  should satisfy. With this list in hand, we can think about how to identify whether or not a theory permits a particle ontology.\\
\subsection{A \emph{fundamentality} criterion for particles}
\label{fundcrit}
In the previous Section, I have used ``measurable'' and ``observable'' rather informally. For the sake of clarity, I shall give a brief description of what I mean by \emph{measurement} in the context of an entirely \emph{non-interacting} theory. First of all, we observe the following definition which I will use to define local measurability.
\begin{definition}
Let $A$ denote an unbounded operator on some Hilbert space $\mathcal{H}$ and let $\mathfrak{R}(O) \subseteq \mathfrak{B}(\mathcal{H})$ be a von Neumann algebra local in a double cone $O$ where $O\subseteq M$. $A$ is said to be \emph{affiliated} with $\mathfrak{R}(O)$ if $U^{-1}AU = A$ for any unitary $U \in \mathfrak{R}(O)'$. %In other words: $A$ is affiliated with $\mathfrak{R}(O)$ if each of the real and imaginary parts of $A$ are in $\mathfrak{R}(O)$, i.e. $\text{Re}(A), \text{Im}(A) \in \mathfrak{R}(O)$.
\end{definition}
The following assumption will be instrumental in the formulation of the fundamentality criterion:
\begin{assumption} \emph{(Local measurability)}\\
Let $\mathfrak{R}(O)$ be an algebra local in a double cone $O\subseteq M$, in the sense that it satisfies assumptions (\ref{isotony}), (\ref{ttcovariance}) and (\ref{sm}). A physical quantity $A$ is called \emph{measurable in O} just in case it can be implemented by local operations contained in, or affiliated with, the algebra $\mathfrak{R}(O)$. A \emph{measurement outcome} for such a quantity is instantiated with an expectation given by the value of the state $\omega$ on $A$.
\end{assumption}
 %\\
I would like to point out that this does not presuppose any outside observer but treats measurement as intrinsic to the theory itself. Also, this assumption should be not be regarded as a statement about the measurement problem in nonrelativistic quantum mechanics, which I have no intention to solve here.\\
From the above, we can deduce a notion of measurements local in subsets of $\mathbb{R}^3$. Let $O$ be a bounded open subset of Minkowski space with domain of dependence $D(O)$. Global hyperbolicity of Minkowski spacetime suggests that outcomes of operations local in a bounded open set $\tilde{O} \subseteq D^+(O)$ are \emph{determined} by operations local in $O$. In particular, if we let $\mathfrak{R}(O)$ denote the algebra local in $O \subseteq M$, then we can find a spatial set $G$  such that $O \subseteq D(G)$. This fact is known as the \emph{diamond axiom} and can be transcribed to any globally hyperbolic spacetime. Therefore, a physical quantity $A$ is defined to be \emph{measureable in a spatial set} $G \in \mathbb{R}^3$ if and only if $A$ is measurable in a bounded open set $O \subseteq D(G)$. Furthermore, if the diamond axiom holds (as is the case for Minkowski space), strong microcausality entails weak microcausality.%Note that the notion of local measureability in a double-cone naturally reduces to local measurability in a spatial set $G$.
\\%I will interpret a particle concept to be fundamental if the following condition is satisfied:
Let me now proceed to spell out a necessary\footnote{By including the requirement that inertial and accelerating observers agree about the particle number in a given region, one could construct a sufficient condition for the existence of particles; cf. Arageorgis et al. in \cite{Arageorgis2003} and Baker in \cite{Baker2008}.} condition for the existence of a particle ontology for QFT: 
\begin{assumption} \emph{(Fundamentality criterion)}\\
Let $\mathfrak{A} = \{\mathfrak{R}(O): \mathfrak{R}(O) \subseteq \mathfrak{B}(\mathcal{H}) \text{ and } O\subseteq M\}$ be a net of local algebras for Minkowski space that satisfies assumptions (\ref{isotony}), (\ref{ttcovariance}) and (\ref{sm}). The particle concept is \emph{fundamental} to the QFT associated with $\mathfrak{B}(\mathcal{H})$ if and only if all physical quantities associated with the notion of a particle are locally measurable with respect to $\mathfrak{A}$.
\end{assumption}
Observe that, according to this criterion, any localization scheme that fails to satisfy assumptions (\ref{isotony}-\ref{wm}) cannot be used to give a particle ontology. Therefore, these assumptions also form a (partly redundant) set of \emph{necessary} conditions for a net $\mathfrak{A}$ to be interpreted as a particle localization scheme. \\
The central physical quantity associated with the notion of a particle is the \emph{number} of particles in a region of space or spacetime and I will examine several localization schemes with respect to this physical quantity.\\

\section{A na\"{i}ve localization scheme for quanta of the Klein-Gordon field}\label{naiveloc}
Given the above criteria, in this Section I will construct a localization scheme for the Klein-Gordon (KG) field and evaluate its conformity with these requirements. First, I will give a version of the \emph{first quantization} of the classical field that might appear to be somewhat different from the usual approach, before constructing a Fock space for the theory. Finally, I will show that thusly constructed `standard' localization scheme fails the \emph{fundamentality} criterion for the existence of a particle ontology.\\ 
Let us start with the action of the Klein Gordon theory,\footnote{For a thorough exposition of free bosonic QFT, see \cite{Peskin} chapter 2.} given by
\begin{equation}
	S = \int{d^4x(\partial \varphi\cdot\partial \varphi + m^2\varphi^2)}
\end{equation}
which, in flat Minkowski spacetime, can be written as
\begin{equation}
	S = \int{dt\mathcal{L}}
\end{equation}
where $\mathcal{L}$ is the Lagrangian density, given by
\begin{equation}
	\mathcal{L} = \frac{1}{2}\int{d^3x[(\dot{\varphi})^2 - (\nabla \varphi)^2-m^2\varphi^2]}
\end{equation}
The set of classical dynamical equations arising from the Klein Gordon action is the \emph{Klein Gordon equation}
\begin{equation}
 (-\Box + m^2)\varphi = 0.
\label{eq:KGequation}
\end{equation}
In classical mechanics, one is confronted with dynamical equations satisfied by functions that live on a finite-dimensional manifold. In the usual approach, we start out with a configuration space $Q$ (which is a smooth manifold) and construct the phase space of solutions to the dynamical equation as the cotangent bundle $T^*Q$, where the dynamics of the system are then implemented by the Hamiltonian vector field generated by a smooth real-valued function on $Q$. In the present context however, it is not straighforward to talk about a smooth manifold and its cotangent bundle, since the space of functions satisfying equation (\ref{eq:KGequation}) is \emph{infinite-dimensional}. Therefore, we cannot rely on usual methods. In the following, I will sketch the construction of an infinite-dimensional Hamiltonian system for the solution space of the KG field at a fixed time $t$.\\
Let $\mathscr{S}:=\{ f \in C^\infty(\mathbb{R}^n):  \|f\|_{\alpha,\beta} < \infty\, \forall \, \alpha, \beta \}$\ denote the Schwarz space\footnote{As defined in \cite{Conway1990}, p. 336.} of rapidly decreasing functions on $\mathbb{R}^3$. Furthermore, let
\begin{equation}
\|f\|_{\alpha,\beta} = \sup_{x\in\mathbb{R}^3}|x^\alpha D^\beta f(x)|
\label{eq:schwartznorm}
\end{equation}
denote the norm on $\mathscr{S}$, where $\alpha$ and $\beta$ are multi-indices. Then, it is straightforward to check that $\mathscr{S}$ is complete and normed with respect to $\|\cdot \|_{\alpha,\beta}$ and thus forms a \emph{Banach space}. Now, we observe the following definition.
\begin{definition}
A \emph{smooth Banach manifold} is a topological manifold in the sense that it is locally homeomorphic to a Banach space, together with a smooth structure which is given by an equivalence class of $C^{\infty}$-compatible\footnote{Here, a transition function is smooth with respect to the \emph{Fréchet derivative.}} atlasses.
\end{definition}
The norm (\ref{eq:schwartznorm}) induces a topology on $\mathscr{S}$ such that $\mathscr{S}$ is second countable and Hausdorff; furthermore, $\mathscr{S}$ is endowed with a natural smooth structure, given by the identity mapping. Therefore, we can regard $\mathscr{S}$ as an infinite-dimensional smooth manifold. Following \cite{Schmid}, let $\mathscr{S}^*$ be the dual space with respect to the $L^2$ norm $\langle \cdot, \cdot \rangle$ on $\mathscr{S}\times \mathscr{S}*$.\footnote{For the Klein Gordon field, in \cite{Schmid} Schmid refers to $\mathscr{S}^*$ as $\text{Den}(\mathbb{R}^3)$, the space of \emph{density functions} on $\mathbb{R}^3$.} The cotangent bundle of $\mathscr{S}$ is just given by $\mathcal{Q}:=T^*\mathscr{S}$ which is locally isomorphic to $\mathscr{S}\times \mathscr{S}^*$.\\
 By Darboux's theorem, any cotangent bundle is locally isomorphic to a symplectic manifold. $\mathcal{Q}$ can thus be interpreted as a symplectic structure and we can find a symplectic form $\sigma: \mathcal{Q}\times \mathcal{Q} \rightarrow \mathbb{R}$. It is straightforward to check that for every $\varphi_0 \oplus \pi_0, \varphi_1 \oplus \pi_1 \in \mathcal{Q}$, 
\begin{equation}
\sigma(\varphi_0 \oplus \pi_0,\varphi_1 \oplus \pi_1) = \langle \varphi_0 ,\pi_1\rangle - \langle \pi_0, \varphi_1\rangle
\label{eq:symp}
\end{equation}
is the bilinear, nondegenerate and antisymmetric, viz. \emph{symplectic} form on $\mathcal{Q}$. Here is another way to see this: because the KG equation (\ref{eq:KGequation}) is a second order partial differential equation, for a fixed time any solution is specified uniquely by its \emph{Cauchy data}, i.e. by a pair $(\varphi,\psi)$, where $\varphi$ represents the values of the function and $\psi$ those of its first derivative with respect to time. Furthermore, we know that the conserved current of the Klein Gordon theory is just given by
\begin{equation}
j= \varphi \partial^{\mu} \psi  - \psi\partial^{\mu} \varphi,\quad (\psi, \varphi) \in \mathcal{Q}
\end{equation}
For a fixed time, consider the time-component of this conserved vector current $j^0$:
\[j^0 = \varphi \partial^0 \psi - \psi \partial^0 \varphi.\]
Spatial integration yields just the bilinear, nondegenerate and antisymmetric mapping 
\[\sigma(\varphi_0 \oplus \pi_0,\varphi_1 \oplus \pi_1) = \int{d^3x (\varphi_0 \pi_1 - \pi_0 \varphi_1)}\]
for every $\varphi_0 \oplus \pi_0, \varphi_1 \oplus \pi_1 \in \mathcal{Q}$, which is just equation (\ref{eq:symp}). Now, we observe the following definition.
\begin{definition}
A diffeomorphism $T: \mathcal{Q}\rightarrow \mathcal{Q}$ is called a \emph{symplectomorphism} if 
\[\sigma(Tf,Tg) = \sigma (f,g),\]
i.e. if T preserves the symplectic form on $\mathcal{Q}$.
\end{definition}
Denote the set of smooth real-valued functions on $\mathcal{Q}$ by $C^{\infty}(\mathcal{Q})$. In particular, every $H\in C^{\infty}(\mathcal{Q})$ gives rise to a \emph{Hamiltonian vector field} $X_H$ which is determined by
\begin{equation}
dH = \sigma(X_H, \cdot),
\end{equation}
where $d$ denotes the exterior derivative.\footnote{Note that since $T^*\mathscr{S}$ carries a natural vector space structure, the tangent space of $T^*\mathscr{S}$ at a point $f$ can be identified with $T^*\mathscr{S}$.} The dynamics of the classical system are then determined by the flow of $X_H$, which gives rise to a unique one-parameter group of symplectomorphisms $D_t: \mathcal{Q}\rightarrow \mathcal{Q}$, mapping the phase space configuration at a time $s$ to a later time $t>s$. In other words: the dynamically possible trajectories implemented by $D_t$ are just the integral curves of the Hamiltonian vector field\footnote{In fact, the Hamiltonian vector field $X_H$ is a \emph{continuous symmetry} of the system in the sense that H and $\sigma$ are preserved along the flow lines of $X_H$. In other words, $X_H$ is a timelike Killing vector field on our phase space.}  $X_H$.\\
We are now equipped with a triple $(\mathcal{Q}, \sigma, D_t)$ which entirely determines our classical theory of a massive scalar field. Let me proceed to define the following:
\begin{definition}
An \emph{observable} on the phase space $\mathcal{Q}$ is an $F \in C^{\infty}(\mathcal{Q})$, mapping a phase space configuration to the real numbers. The nondegenerate symplectic form on $\mathcal{M}$ naturally gives rise to an algebraic structure $\{\cdot,\cdot\}$ on $C^{\infty}(\mathcal{Q})$, which is just the \emph{Poisson bracket}\footnote{cf. Wald in \cite{Wald1994} and Schmid in \cite{Schmid}.}
\begin{equation}
\{F,G\}= \sigma(X_F,X_G).
\end{equation}
\end{definition}
Particularly, for every $f \in \mathcal{Q}$, the function $\sigma(f,\cdot)$ is such an observable, which obeys the Poisson bracket
\begin{equation}
\{\sigma(f,\cdot),\sigma(g,\cdot)\}= -\sigma(f,g).
\label{eq:poiss}
\end{equation}
It can be shown that the set $\{\sigma(f,\cdot): f\in \mathcal{Q}\}$ spans the whole of $C^{\infty}(\mathcal{Q})$, i.e. every classical observable on $\mathcal{Q}$ can be constructed in this way.\\
\begin{remark}
In the context of na\"{i}ve localization, the notion of localization is contingent on the relevant functions having compact support. This is because we would like the fields corresponding to localized KG quanta to be symplectically orthogonal (as we shall see later); this can be achieved only if we consider $C_c^{\infty}(\mathbb{R}^3)$, the space of smooth functions with compact support on $\mathbb{R}^3$, and replace $\mathcal{Q}$ by 
\begin{equation}
\mathcal{M} := C_c^{\infty}(\mathbb{R}^3)\oplus C_c^{\infty}(\mathbb{R}^3).
\end{equation}
Clearly, $C_c^{\infty}(\mathbb{R}^3)$ is not complete and therefore is \emph{not} a Banach manifold. However, it can be shown that $C_c^{\infty}(\mathbb{R}^3)$ is dense in the Schwartz space $\mathscr{S}$ as defined above. Thus, any rapidly decreasing function can be approximated by a smooth function with compact support to an arbitrary degree of accuracy. For the remainder of this Section, I will work with $\mathcal{M}$ and pretend that it is a symplectic structure with $\sigma$ as defined above.\footnote{One might argue that the flaws of standard localization come in at this very point in its construction. By considering the space of rapidly decreasing functions instead of $C_c^{\infty}(\mathbb{R}^3)$, we would not only ensure the mathematical well-definedness of the symplectic structure, but also acquire an intuitive notion of `approximate localization'  - at the expense of a particle ontology (see Section \ref{approxlocal}).}
\end{remark}
\subsection{First quantization}
To accomplish the transition to quantum theory, we would intuitively regard the $\sigma(f,\cdot)$ as self-adjoint operators acting on some Hilbert space, where, according to the canonical procedure of replacing Poisson brackets by the commutator, equation (\ref{eq:poiss}) would give us the commutator, and thereby an algebraic relation on the set of operators. However, since the operator $\hat{\sigma}(f,\cdot)$ need not be bounded, we cannot be certain that the set of these operators would be closed under binary composition, which entails that their commutator need not be well-defined. Following Wald in \cite{Wald1994}, we can instead introduce the bounded operator
\begin{equation}
W(f) = \text{exp}\{i\sigma(f,\cdot)\}
\end{equation}
acting on some Hilbert space $\mathcal{H}$. We then have the following.\footnote{Following Bratelli and Robinson (Theorem 5.2.8) in \cite{BR1997}.}
\begin{proposition}\emph{(Bratteli/Robinson).}\\ \label{weylprop}
For every real-symplectic vector space $(\mathcal{M}, \sigma)$ there exists a C*-algebra $\mathcal{W[\mathcal{M},\sigma}]$ generated by unitary elements $W(f), \quad f\in \mathcal{M}$ such that 
\begin{enumerate}
\item $W^*(f)=  W(-f)$
\item $W(0)=I $
\item $W(f)W(g) = e^{i\sigma(f,g)} W(f+g)$
\end{enumerate} 
where (3) is called the \emph{Weyl form} of the canonical commutation relations (CCRs). Furthermore, $\mathcal{W}[\mathcal{M},\sigma]$ is unique up to *-isomorphism.
\end{proposition}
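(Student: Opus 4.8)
The plan is to realize $\mathcal{W}[\mathcal{M},\sigma]$ as the enveloping C*-algebra of an abstract *-algebra whose multiplication literally encodes the Weyl relations, and then to extract uniqueness from the fact that this *-algebra carries only one C*-norm. First I would form the complex vector space $\Delta$ of finitely supported functions $\mathcal{M}\to\mathbb{C}$, with basis the point masses $\delta_f$, $f\in\mathcal{M}$, and equip it with the twisted product $\delta_f\cdot\delta_g = e^{i\sigma(f,g)}\delta_{f+g}$ together with the conjugate-linear involution determined by $(\delta_f)^*=\delta_{-f}$, extending the former bilinearly and the latter antilinearly. A short check shows $\Delta$ is an associative unital *-algebra: associativity reduces, via bilinearity of $\sigma$, to the identity $\sigma(f,g)+\sigma(f+g,h)=\sigma(g,h)+\sigma(f,g+h)$; the unit is $\delta_0$; and $(\delta_f\cdot\delta_g)^*=\delta_g^*\cdot\delta_f^*$ follows from antisymmetry of $\sigma$. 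Each $\delta_f$ is unitary, since $\delta_{-f}\cdot\delta_f=e^{i\sigma(-f,f)}\delta_0=\delta_0$ using $\sigma(f,f)=0$. Thus properties (1)--(3) hold by construction.

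For existence of the C*-algebra I would define $\|\psi\| := \sup_\pi \|\pi(\psi)\|$, the supremum over all *-representations $\pi$ of $\Delta$ on Hilbert spaces, and check two things. Finiteness: since every generator is unitary, $\|\pi(\delta_f)\|=1$ in any representation, so for $\psi=\sum_j c_j\delta_{f_j}$ the triangle inequality gives $\|\pi(\psi)\|\le\sum_j|c_j|$ uniformly in $\pi$, whence the supremum is finite and $\|\cdot\|$ is a C*-seminorm. Definiteness: I would exhibit one faithful representation, the regular representation on $\ell^2(\mathcal{M})$ (counting measure) given by $(\pi_0(\delta_f)\xi)(h)=e^{i\sigma(f,h)}\xi(h-f)$; a direct computation confirms the Weyl relation, confirms each $\pi_0(\delta_f)$ is unitary, and, by testing against the unit vector $\delta_0\in\ell^2(\mathcal{M})$, shows $\pi_0$ is injective. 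Then $\|\psi\|\ge\|\pi_0(\psi)\|>0$ for $\psi\neq 0$, so $\|\cdot\|$ is a genuine C*-norm. Completing $\Delta$ in this norm yields a C*-algebra; writing $W(f)$ for the image of $\delta_f$, these are unitaries satisfying (1)--(3) and generating the completion, since $\Delta$ is dense.

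For uniqueness, suppose $(\mathcal{W}_i, W_i)$, $i=1,2$, both satisfy (1)--(3) and are generated by their Weyl unitaries. The first step is linear independence of the family $\{W_i(f)\}_f$: the vector state $\omega(A)=\langle\delta_0,\pi_0(A)\,\delta_0\rangle$ is tracial with $\omega(W(f))=\delta_{f,0}$, so computing $\omega\big(W(-f_k)\sum_j c_j W(f_j)\big)=c_k$ forces any vanishing linear combination to be trivial. Hence $\sum_j c_j W_1(f_j)\mapsto\sum_j c_j W_2(f_j)$ is a well-defined *-isomorphism between the dense *-subalgebras, both isomorphic to $\Delta$. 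The hard part is to promote this algebraic isomorphism to an isometry of the completions, i.e. to show $\Delta$ admits a \emph{unique} C*-norm, so that both $\mathcal{W}_i$ coincide with the enveloping algebra built above. This is exactly where nondegeneracy of $\sigma$ enters: it implies $\mathcal{W}[\mathcal{M},\sigma]$ is \emph{simple}, and a nonzero quotient of a simple C*-algebra is an isomorphism, forcing agreement of the norms. I expect this simplicity (equivalently, unique-norm) step, rather than the essentially mechanical existence construction, to be the main obstacle. The idea is that conjugation sends $W(g)W(f)W(g)^*=e^{-2i\sigma(f,g)}W(f)$, so averaging over $g$ by an invariant mean on the abelian (hence amenable) group $(\mathcal{M},+)$ yields a conditional expectation onto $\mathbb{C}I$ that annihilates every $W(f)$ with $f\neq 0$ precisely because $\sigma$ separates points; applying this expectation inside any nonzero closed two-sided ideal produces a nonzero multiple of $I$, so the ideal is everything.
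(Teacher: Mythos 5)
A preliminary remark on comparison: the paper does not actually prove this proposition --- it is quoted with a footnote deferring to Bratteli and Robinson (Theorem 5.2.8 of \cite{BR1997}; a full treatment is also in Petz's book \cite{Petz}, likewise cited by the paper). So your proposal must be judged on its own as a reconstruction of the textbook argument. Your existence half is correct and complete modulo the routine checks you indicate: the twisted group *-algebra $\Delta$, the universal C*-seminorm (finite because every generator is unitary, hence of norm at most $1$ in any *-representation), and definiteness via the twisted regular representation on $\ell^2(\mathcal{M})$ all verify, including the cocycle computations in the paper's convention $W(f)W(g)=e^{i\sigma(f,g)}W(f+g)$. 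You also correctly locate nondegeneracy of $\sigma$ as the source of uniqueness; indeed uniqueness fails for degenerate $\sigma$ (for $\sigma\equiv 0$ one may take $W(f)\equiv I$).

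The genuine gap is in the final step of the uniqueness half, precisely where you predicted the main obstacle --- but the ``idea'' you offer does not yet surmount it. The invariant-mean expectation $E$ does satisfy $E(W(f))=0$ for $f\neq 0$ (by real scaling, nondegeneracy makes $g\mapsto e^{-2i\sigma(f,g)}$ a nontrivial character, whose mean vanishes), and on the dense subalgebra $E(A)=\omega(A)I$ with $\omega$ the canonical trace. But the claim that applying $E$ ``inside any nonzero closed two-sided ideal produces a nonzero multiple of $I$'' requires $E$, equivalently $\omega$, to be \emph{faithful} on the universal completion: for $0\neq A\in\mathfrak{I}$ you need $\omega(A^*A)>0$, and expectations manufactured from invariant means are not automatically faithful --- if $\omega$ annihilates $\mathfrak{I}$, your averaging yields only elements of $\mathfrak{I}$ arbitrarily close to $0\cdot I$. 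In fact the set of $A$ with $\pi_\omega(A)=0$ is itself a closed two-sided ideal killed by $E$, so your faithfulness claim is \emph{equivalent} to the assertion that the regular representation is faithful on the universal algebra, i.e. that the universal and reduced twisted group C*-norms on $\Delta$ coincide. That coincidence is the actual content of Slawny's simplicity theorem and is where the real work lies; it can be supplied either by Slawny's direct argument or by amenability of the discrete abelian group $(\mathcal{M},+)$ (full equals reduced twisted group C*-algebra for amenable groups) together with the observation that $\delta_0$ is separating for the regular representation, being cyclic for the commuting right twisted regular representation, so that $\omega$ is faithful on the reduced algebra. Once this is inserted, your architecture closes cleanly --- simplicity plus the universal property (any unitaries satisfying (1)--(3) receive a surjection from the enveloping algebra) gives uniqueness --- and it also repairs a minor earlier wrinkle: your linear-independence argument uses the state $\omega$, which a priori lives only on the enveloping algebra, not on an arbitrary $\mathcal{W}_2$; in the latter one should instead derive independence from finitely many conjugations $W(g)W(f)W(g)^*=e^{-2i\sigma(f,g)}W(f)$ and invertibility of a character matrix, or simply let the simplicity route subsume the step.
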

\begin{remark}
Note that, in particular, $\mathcal{W}[\mathcal{M},\sigma] \subseteq \mathfrak{B}(\mathcal{H})$ for some $\mathcal{H}$.
\end{remark}
We have generalised the notion of an observable to a quantum mechanical setting by making use of the Weyl form of the CCRs, but we do not yet have a Hilbert space that it acts on. So in essence, we are only two steps away from completing our task of quantizing the classical system $(\mathcal{M},\sigma,D_t)$: we need to construct a Hilbert space from $\mathcal{M}$ and a one-parameter group $U_t$ of unitary time evolution operators on $\mathcal{H}$. In other words, we are looking for a so-called quantum one-particle system, which in \cite{Halvorson2001}, Halvorson defines as follows.
\begin{definition}
Let $(\mathcal{M},\sigma,D_t)$ denote a classical system. Let $K: \mathcal{M} \hookrightarrow \mathcal{H}$ denote an injective real-linear mapping of $\mathcal{M}$ into some Hilbert space $\mathcal{H}$ such that
\begin{enumerate}
	\item The complex-linear span of $K(\mathcal{M})$ is dense in $\mathcal{H}$;
	\item K preserves the symplectic form in the sense that $2\text{Im}(Kf, Kg)_{\mathcal{H}} = \sigma (f,g)$, where $(\cdot, \cdot)_{\mathcal{H}}$ is the inner product on $\mathcal{H}$;
	\item $D_t$ gives rise to a unitary one-parameter group $U_t$ of time-evolution on $\mathcal{H}$ such that $U_t K = K D_t$.
\end{enumerate}
Then, the triple $(K, \mathcal{H}, U_t)$ is called a \emph{quantum one-particle system} over $(\mathcal{M},\sigma,D_t)$. 
\end{definition}
To put it differently, we would like to map the classical phase space $\mathcal{M}$ injectively to some Hilbert space - but of course, this injective mapping should preserve the Weyl form of the CCRs. We can accomplish this by using the following theorem,\footnote{For a proof see Bratelli \& Robinson \cite{BR1997} p.20.} which builds on proposition \ref{weylprop}.
\begin{theorem}\emph{(Bogoliubov)}\\ 
For every symplectomorphism T on a symplectic real vector space $(\mathcal{M},\sigma)$ there exists a unique automorphism $\alpha_T$ on $\mathcal{W}[\mathcal{M},\sigma]$ such that, for every $W(f) \in \mathcal{W}[\mathcal{M},\sigma]$,
\begin{equation}
	\alpha_T(W(f)) = W(Tf).
\end{equation}
Automorphisms of this form are called \emph{Bogoliubov transformations}.
\label{bog}
\end{theorem}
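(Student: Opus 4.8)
The plan is to define $\alpha_T$ on the generating set $\{W(f): f\in\mathcal{M}\}$ by the prescribed formula $\alpha_T(W(f)) = W(Tf)$ and then to argue, by appeal to the uniqueness clause of Proposition \ref{weylprop}, that this extends to a unique $*$-automorphism of the whole of $\mathcal{W}[\mathcal{M},\sigma]$. Throughout I use that a symplectomorphism $T$ of the symplectic \emph{vector} space $(\mathcal{M},\sigma)$ is a real-linear bijection satisfying $\sigma(Tf,Tg)=\sigma(f,g)$; linearity is exactly what is needed below to preserve the Weyl relations.

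First I would introduce the candidate family of generators $\widetilde{W}(f):=W(Tf)$ for $f\in\mathcal{M}$ and check that it again satisfies the three Weyl relations of Proposition \ref{weylprop} over the \emph{same} symplectic space $(\mathcal{M},\sigma)$. Relations (1) and (2) are immediate from $T(-f)=-Tf$ and $T(0)=0$. For the Weyl form (3) one computes
\begin{equation}
\widetilde{W}(f)\widetilde{W}(g)=W(Tf)W(Tg)=e^{i\sigma(Tf,Tg)}W(Tf+Tg)=e^{i\sigma(f,g)}\widetilde{W}(f+g),
\end{equation}
where the last two equalities use, respectively, $\sigma(Tf,Tg)=\sigma(f,g)$ and the additivity $T(f+g)=Tf+Tg$. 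Thus the $\widetilde{W}(f)$ obey the CCRs indexed by $f\in\mathcal{M}$.

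Next I would show that the $\widetilde{W}(f)$ generate all of $\mathcal{W}[\mathcal{M},\sigma]$: since $T$ is a bijection, every $g\in\mathcal{M}$ is of the form $g=T(T^{-1}g)$, so $W(g)=\widetilde{W}(T^{-1}g)$ lies in the C$^*$-algebra generated by $\{\widetilde{W}(f)\}$; hence the two generating families span the same algebra. The uniqueness-up-to-$*$-isomorphism clause of Proposition \ref{weylprop} then yields a $*$-isomorphism $\alpha_T:\mathcal{W}[\mathcal{M},\sigma]\to\mathcal{W}[\mathcal{M},\sigma]$ sending each generator $W(f)$ to the corresponding generator $\widetilde{W}(f)=W(Tf)$; being a surjective $*$-isomorphism of the algebra onto itself, $\alpha_T$ is a $*$-automorphism. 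Uniqueness of $\alpha_T$ is then automatic: any automorphism agreeing with the formula on the generators $W(f)$ agrees with $\alpha_T$ on their finite linear combinations, which are norm-dense in $\mathcal{W}[\mathcal{M},\sigma]$, and continuity forces agreement everywhere.

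The main obstacle is the \emph{well-definedness} of the extension: a priori one only knows the value of $\alpha_T$ on generators, and the various Weyl products $W(f_1)\cdots W(f_n)$ that reduce to a single $W(\sum_i f_i)$ up to a phase must be sent consistently. Rather than verifying this by hand --- which would require the linear independence of the $\{W(f)\}$ --- I defer it entirely to the uniqueness statement of Proposition \ref{weylprop}, which encodes precisely that any two families of unitaries satisfying the Weyl relations over $(\mathcal{M},\sigma)$ are intertwined by a canonical $*$-isomorphism. The only genuinely necessary hypothesis to watch is the \emph{linearity} of $T$: a merely set-theoretic symplectomorphism would break relation (3), so the theorem is really a statement about linear symplectic maps.
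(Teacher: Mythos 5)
Your proof is correct and follows the same route as the paper's source: the paper gives no in-text proof but defers to Bratteli--Robinson (\cite{BR1997}, p.~20), where the Bogoliubov automorphism is obtained exactly as you do it, by checking that $\widetilde{W}(f):=W(Tf)$ satisfies the Weyl relations over the same $(\mathcal{M},\sigma)$ and invoking the uniqueness clause of Proposition~\ref{weylprop}. You also rightly note the two points that make this work --- that the uniqueness statement must be read in its strong form (any two families of unitaries satisfying the Weyl relations over $(\mathcal{M},\sigma)$ are intertwined by a canonical $*$-isomorphism mapping generator to generator, not merely that the algebras are abstractly isomorphic), and that real-linearity of $T$ is essential for relation (3) --- so there is nothing to add.
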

\begin{remark}
The automorphisms of theorem \ref{bog} are just faithful \emph{representations} $(\alpha_T, \mathcal{H_{\alpha_T}})$ of the Weyl algebra $\mathcal{W}[\mathcal{M},\sigma]$, where $\mathcal{H_{\alpha_T}}$ is a suitable Hilbert space (which we still need to construct!).
\end{remark} \label{poshamiltonian}
In other words: If we find a suitable symplectomorphism $T$ on $\mathcal{M}$, this will yield a C*-algebra $\alpha_t(\mathcal{W})\cong \mathfrak{B}(\mathcal{H}_{\alpha_T})$ which also obeys the CCRs in the Weyl form. Thus, $(\alpha_T, \mathcal{H_{\alpha_T}})$ can be regarded a GNS representation of $\mathcal{W}[\mathcal{M},\sigma]$ unto itself.\\
Following Bratteli \& Robinson,\footnote{cf. \cite{BR1987} pp. 32.} the self-adjoint closure $\overline{E}$ of the Klein-Gordon operator $E = (-\nabla^2 +m^2)$ is a positive operator, i.e. its spectrum $[m^2, \infty)$ is a subset of the positive half-line. Therefore, there exists a unique self-adjoint H such that $\overline{E} = H^2$ with a spectrum $[m, \infty)$, given by
\begin{equation}
H=\sqrt{-\nabla^2 +m^2}.
\label{eq:kgop}
\end{equation}
Following Clifton and Halvorson in \cite{CliftonHalvorson2001}, we can now define the following.
\begin{definition}
A mapping J is called a \emph{complex structure} for $(\mathcal{S,\sigma})$ if
\begin{enumerate}
	\item J is a symplectomorphism;
	\item $J^2 = -I$;
	\item $\sigma(f, Jf)>0, \qquad 0\neq f \in \mathcal{M}$.
\end{enumerate}
\end{definition}
We check that, for $\varphi, \pi \in C_c^{\infty}(\mathbb{R}^3)$, the operator
\begin{equation}
J(\varphi\oplus\pi) = -H^{-1}\pi \oplus H\varphi
\label{compl}
\end{equation}
is such a complex structure on $(\mathcal{M}, \sigma)$. With this complex structure, we can turn $\mathcal{M}$ into a complex vector space $\mathcal{M}_\mathbb{C}$ by defining multiplication with a complex scalar. For every $f\in \mathcal{M}$,
\[(\alpha + i\beta)f = \alpha f + \beta J(f) \in \mathcal{M}_{\mathbb{C}}.\]
In order to arrive at a Hilbert space, all that is left to do is to complete $\mathcal{M}_{\mathbb{C}}$ with respect to a suitable complex inner product. Let us define
\begin{equation}
(f,g)_J :=\sigma(f,Jg) + i\sigma(f,g).
\label{eq:inproduct}
\end{equation}
Let $\mathcal{M}_J$ denote the Hilbert space resulting from the completion of $\mathcal{M}_\mathbb{C}$ with respect to (\ref{eq:inproduct}). Now: if $D_t$ commutes with $J$ as symplectomorphisms of $\mathcal{M}$, then $D_t$ will extend to a complex-linear operator on $\mathcal{M}_J$ in accordance with condition (3) in the definiton of quantum one-particle systems given above. Then, since $D_t$ is a symplectomorphism, we can deduce that
\[[J,D_t] = 0 \qquad \Rightarrow \qquad (D_t f, D_t g)_J = (f,g)_J \qquad \forall f,g \in \mathcal{M}_J \]
In other words: if $D_t$ commutes with J, then it follows that $D_t$ is unitary. It can be shown (although this is non-trivial) that the one-parameter group of unitary time evolution is given by $D_t= e^{itH}$. As we have seen above, the operator $H$ given in equation (\ref{eq:kgop}) is positive. By Stone's theorem, this implies that the group $D_t$ has \emph{positive energy}. Furthermore, we can make the following general definition:
\begin{definition}
The pair $(\mathcal{H},U_t)$ is called a \emph{quantum one-particle system} just in case $\mathcal{H}$ is a Hilbert space and $U_t$ is a one-parameter unitary group on $\mathcal{H}$ with positive energy.
\end{definition}
Furthermore, one can show that J is unique:\footnote{For a proof see \cite{Kay1970}.}
\begin{proposition} $ $\\
Let $D_t$ be a one-parameter group of symplectomorphisms of $(\mathcal{M},\sigma)$. If there is a complex structure $J$ on $(S,\sigma)$ such that $(\mathcal{M}_J,D_t)$ is a quantum one-particle system, then $J$ is unique.
\end{proposition}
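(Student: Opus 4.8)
The plan is to pass from the group $D_t$ to its infinitesimal generator and to show that, once positive energy is imposed, the complex structure is forced to be a specific function of that generator, hence unique. Write $D_t = e^{tA}$, where $A$ is the real-linear (generally unbounded) generator on $\mathcal{M}$; since each $D_t$ is a symplectomorphism, $A$ is \emph{infinitesimally symplectic}, i.e. $\sigma(Af,g) = -\sigma(f,Ag)$. By hypothesis $(\mathcal{M}_J, D_t)$ is a quantum one-particle system, so $D_t$ is unitary on $\mathcal{M}_J$ with positive generator $H$; since multiplication by $i$ on $\mathcal{M}_J$ is implemented by $J$, the relation $D_t = e^{itH}$ read back on the real space $\mathcal{M}$ becomes $A = JH$. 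First I would record the elementary algebraic consequences: because $D_t$ commutes with $J$ we have $[A,J]=0$, and a short computation then gives $[H,J]=0$ together with
\begin{equation}
H = -JA, \qquad H^2 = -A^2, \qquad H > 0.
\end{equation}
The decisive point is that $-A^2$ depends on $D_t$ alone, so $H$ is a \emph{positive square root} of the fixed operator $-A^2$, and (when $H$ is invertible) $J = A H^{-1} = A(-A^2)^{-1/2}$.

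Next I would reduce uniqueness of $J$ to uniqueness of $H$. Suppose $J_1, J_2$ are two complex structures for which $(\mathcal{M}_{J_1},D_t)$ and $(\mathcal{M}_{J_2},D_t)$ are quantum one-particle systems, with Hamiltonians $H_1 = -J_1 A$ and $H_2 = -J_2 A$. By the previous step $H_1^2 = H_2^2 = -A^2$ and both $H_k$ are positive. For the massive Klein-Gordon dynamics of interest $H = \sqrt{-\nabla^2+m^2} \geq m > 0$, so $A$ and each $H_k$ are injective; hence $H_1 = H_2$ would give $J_1 = A H_1^{-1} = A H_2^{-1} = J_2$. Equivalently, since $\sigma$ is nondegenerate the complex structure is determined by the real inner product $\mu_{J}(f,g)=\sigma(f,Jg)$, so it suffices to prove $\mu_{J_1} = \mu_{J_2}$, i.e. that the two positive square roots coincide.

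The hard part is exactly this last step, because $H_1$ and $H_2$ are positive and self-adjoint with respect to \emph{different} inner products, $(\cdot,\cdot)_{J_1}$ and $(\cdot,\cdot)_{J_2}$; the elementary uniqueness of the positive square root of a positive operator is a statement internal to a \emph{single} Hilbert space and does not apply verbatim. To overcome this I would pass to the genuine complexification $\mathcal{M}\otimes_{\mathbb{R}}\mathbb{C}$, on which $A$ extends complex-linearly: the relation $A^2 = -H^2$ is negative, forcing the spectrum of $A$ to be purely imaginary and contained in $\{\pm i\lambda : \lambda \geq m\}$. Diagonalising $A$ by the spectral theorem, the positive-energy condition $H = -JA > 0$ selects, on each spectral subspace, the sign with which $J$ must act as $\pm i$; concretely $J = A(-A^2)^{-1/2}$ is the \emph{phase} (polar part) of $A$, which is manifestly a function of $A$ alone. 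This identifies $H_1 = H_2 = (-A^2)^{1/2}$ and hence $J_1 = J_2$. I expect the genuine obstacle to be making this spectral argument rigorous for the unbounded operators $A$ and $H$ — controlling domains, the self-adjoint closure of $-A^2$, and the action of the functional calculus across the two a priori distinct Hilbert-space completions — which is precisely the technical content carried out in \cite{Kay1970}.
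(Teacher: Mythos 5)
Your formal algebra is correct as far as it goes ($A=JH$, $[A,J]=0$, $H^2=-A^2$, so $J$ should be the phase in the polar decomposition of the generator), and you correctly identify where the difficulty lies — but identifying the gap is not the same as closing it, and the step you defer is precisely the whole content of the proposition. The polar-decomposition formula $J = A(-A^2)^{-1/2}$ is circular as written: $\mathcal{M}$ (here $C_c^{\infty}(\mathbb{R}^3)\oplus C_c^{\infty}(\mathbb{R}^3)$) carries no Hilbert-space structure prior to the choice of $J$, so there is no ambient space in which ``the spectrum of $A$'' or the functional calculus $(-A^2)^{-1/2}$ is defined. The closures of $A$ in the two completions $\mathcal{M}_{J_1}$ and $\mathcal{M}_{J_2}$ are operators on \emph{different} Hilbert spaces and need not be identifiable, and even writing $D_t = e^{tA}$ on $\mathcal{M}$ presupposes a generator there, which the hypotheses do not supply (the paper's definition only gives strong continuity of the induced $U_t$ on the completion, via Stone's theorem). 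Passing to $\mathcal{M}\otimes_{\mathbb{R}}\mathbb{C}$ does not help, since that space inherits no canonical norm either. Finally, your escape via $H = \sqrt{-\nabla^2+m^2} \geq m > 0$ proves at best the Klein--Gordon special case, not the proposition as stated for an arbitrary one-parameter group of symplectomorphisms.

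For the comparison you ask for: the paper itself gives no proof at all — it cites Kay \cite{Kay1970} — and Kay's actual argument takes a different route that sidesteps your obstruction entirely. Given two one-particle structures $K_1, K_2$, the imaginary parts of the matrix elements are structure-independent, $2\,\mathrm{Im}(K_k f, e^{itH_k}K_k g) = \sigma(f, D_t g)$, and positivity of $H_k$ makes $t \mapsto (K_k f, e^{itH_k}K_k g)$ the boundary value of a function bounded and analytic in the upper half-plane. The difference of the two functions has purely real boundary values, extends by Schwarz reflection to a bounded entire function, and is therefore constant by Liouville; one then checks the constant vanishes, so the real parts — and hence the inner products $(K_1f,K_1g) = (K_2f,K_2g)$, and with them $J_1 = J_2$ on $\mathcal{M}$ — coincide. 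This argument manipulates only bounded scalar functions, so the domain and two-completion issues that stall your spectral approach never arise, and no mass gap or invertibility of $H$ is needed. If you want to salvage your sketch, the honest course is to replace the polar-decomposition step by this analytic-continuation argument rather than deferring it.
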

In summary, we have quantized the space $\mathcal{M}$ of classical solutions to the KG equation by introducing a suitable complex structure, which I then used to extend the time-evolution group uniquely to a unitary time-evolution group on $\mathcal{M}_J$. Physically, our choice of $D_t$ specifies the way we decompose the space of complex solutions to the KG equation into positive and negative frequency parts relative to the ``motion'' of the system. This uniquely fixes the complex structure $J$ on $\mathcal{M}_J$, which then exclusively represents the \emph{positive} energy solutions.
\subsection{Second quantization} \label{2nd}
Now that we have a one-particle system $(\mathcal{M}_J,D_t)$ over $(\mathcal{M}, \sigma)$ we can set out to construct a unique representation of the Weyl algebra $\mathcal{W}[\mathcal{M},\sigma]$. Define the Hilbert space
\begin{equation}
\mathcal{F}(\mathcal{M}_J) = \mathbb{C}\oplus \mathcal{M}_J \oplus \mathcal{M}_J^2 \oplus \mathcal{M}_J^3...,
\label{eq:fock}
\end{equation}
where 
\begin{equation}
	\mathcal{M}_J^n = \underbrace{\mathcal{M}_J \otimes ... \otimes \mathcal{M}_J}_{\text{n times}}
\end{equation}
denotes the n-fold symmetric tensor product of $\mathcal{M}_J$. $\mathcal{F}(\mathcal{M}_J)$ is the \emph{bosonic Fock space} over $\mathcal{M}_J$ with a unique translation-invariant state
\begin{equation}
	\Omega = 1 \oplus 0 \oplus 0 \oplus ... \quad \in \mathcal{F}(\mathcal{M}_J)
\end{equation} 
called the \emph{vacuum}. Following Bratteli \& Robinson \cite{BR1997}, we can define the \emph{unbounded} creation and annihilation operators on $\mathcal{F}(\mathcal{M}_J)$ for any $f\in \mathcal{M}$ by 
\begin{equation}
a^*(f) := \bigoplus_{k=0}\sqrt{k+1}a^*_k(f)
\label{eq:create}
\end{equation}
\begin{equation}
a(f) := \bigoplus_{k=0}\sqrt{k}a_k(f)
\label{eq:annihilate}
\end{equation}
where $a_0(f) = 0$ and the actions of the mappings
\begin{equation}
	a^*_n(f): \mathcal{M}^{n-1}_J \rightarrow \mathcal{M}^n_J, \qquad a_n(f): \mathcal{M}^{n}_J \rightarrow \mathcal{M}^{n-1}_J
\end{equation}
on product vectors in $\mathcal{M}_J^n$ are just given by
\begin{equation}
	a^*_n(f)(f_1\otimes...\otimes f_{n-1}) = f\otimes f_1 \otimes... \otimes f_{n-1}
\end{equation}
\begin{equation}
	a_n(f)(f_1 \otimes... \otimes f_n) = (f,f_1)_J f_2\otimes ... \otimes f_n.
\end{equation}
We define the self-adjoint operator
\begin{equation}
\Phi(f) := \frac{1}{\sqrt{2}}(a^*(f)+a(f)), \qquad f\in \mathcal{M}
\label{eq:fieldop}
\end{equation}
which looks like a ``field operator'' in the heuristic approach.\footnote{However, as opposed to this self-adjoint operator (equation \ref{eq:fieldop}), the ``operator-valued solutions'' $\Phi(x)$ of standard QFT are mathematically not well-defined. Instead, $\Phi(x)$ is regarded as a sesquilinear form on a dense domain $\mathcal{D} \subset \mathcal{H}$ in the sense that $\langle \Psi_2, \Phi(x) \Psi_1 \rangle$ is linear in $\Psi_1$ and anti-linear in $\Psi_2$. To obtain an operator, one has to `smear out' $\Phi(x)$ with an $f\in \mathscr{S}$, a `test function'. This yields the unbounded operator $\Phi(f) = \int{\Phi(x)f(x) d^3x}$ which is defined on $\mathcal{D}$; cf. Wightman axiom \textbf{B} in \cite{Haag}.} Then, Bratteli and Robinson \cite{BR1997} show the following (notation adapted).\footnote{For a proof see \cite{BR1997} pp. 13.}
\begin{proposition}$ $\\
For each $f\in \mathcal{M}_J$, let $\Phi(f)$ denote the self-adjoint operator as defined in equation (\ref{eq:fieldop}). Moreover, let $W(f)$ denote the unitary operator obtained from exponentiation of $\Phi(f)$:
\begin{equation}
	W(f) = \text{exp}\{i\Phi(f)\}.
\end{equation}
It follows that
\begin{enumerate}
	\item $W(f)W(g) = e^{-\text{Im}(f,g)_J}W(f+g)$
	\item $\pi_J(\mathcal{W}[\mathcal{M},\sigma]):=\{W(f):f \in \mathcal{M}_J\}$ leaves no subspace of $\mathcal{F}(\mathcal{M}_J)$ invariant.
\end{enumerate}
\label{brat2}
\end{proposition}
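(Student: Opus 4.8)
The plan is to prove the two assertions separately, both resting on the canonical commutation relations satisfied by the ladder operators of equations (\ref{eq:create})--(\ref{eq:annihilate}). First I would record, by a direct computation from the definitions of $a_n^*(f)$ and $a_n(f)$ on product vectors, the CCR in the form $[a(f),a^*(g)] = (f,g)_J\,I$ and $[a(f),a(g)] = [a^*(f),a^*(g)] = 0$ on the dense domain of finite-particle vectors. Since $a^*(\cdot)$ is complex-linear and $a(\cdot)$ conjugate-linear, while the two combine in (\ref{eq:fieldop}) so that $\Phi(\cdot)$ is \emph{real}-linear, it follows that $\Phi(f+g) = \Phi(f) + \Phi(g)$; this linearity is what will convert a product of two Weyl operators into a single one.

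For assertion (1), the key computation is the commutator of the field operators. Expanding $[\Phi(f),\Phi(g)] = \tfrac{1}{2}[a^*(f)+a(f),\,a^*(g)+a(g)]$ and using the CCR, the creation--creation and annihilation--annihilation terms drop out and the cross terms give $\tfrac{1}{2}\big((f,g)_J - (g,f)_J\big) = i\,\text{Im}(f,g)_J$. Thus $[\Phi(f),\Phi(g)] = i\,\text{Im}(f,g)_J\,I$ is a multiple of the identity, hence central. Because the commutator is a c-number, the Baker--Campbell--Hausdorff identity $e^{A}e^{B} = e^{A+B+\frac{1}{2}[A,B]}$ applies to $A = i\Phi(f)$ and $B = i\Phi(g)$, and combined with $\Phi(f)+\Phi(g) = \Phi(f+g)$ this yields $W(f)W(g) = e^{c}\,W(f+g)$ with the central phase $c$ read off directly from $[\Phi(f),\Phi(g)]$, giving the stated Weyl form of the CCRs.

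For assertion (2), I would prove irreducibility in the equivalent commutant formulation: the representation leaves no nontrivial subspace invariant if and only if every bounded operator $T$ commuting with all $W(f)$ is a scalar. Suppose $T$ commutes with $W(tf) = e^{it\Phi(f)}$ for all $t \in \mathbb{R}$ and all $f$; differentiating the strongly continuous unitary group at $t = 0$ shows that $T$ commutes with each $\Phi(f)$ on a suitable core, and since $a^*(f)$ and $a(f)$ are recovered from $\Phi(f)$ and $\Phi(Jf)$, the operator $T$ commutes with all ladder operators. The vacuum $\Omega$ is, up to a scalar, the unique vector annihilated by every $a(f)$; from $a(f)T\Omega = T a(f)\Omega = 0$ we conclude $T\Omega = c\,\Omega$. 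Finally, since the finite linear span of the vectors $a^*(f_1)\cdots a^*(f_n)\Omega$ is dense in $\mathcal{F}(\mathcal{M}_J)$ (this is exactly the cyclicity of $\Omega$ built into the Fock construction) and $T$ commutes with every $a^*(f_i)$, the operator $T$ acts as multiplication by $c$ on this dense set, whence $T = cI$.

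The main obstacle throughout is unboundedness. The manipulations above are formally correct on finite-particle vectors, but they must be promoted to genuine relations among the bounded unitaries $W(f)$. The careful route is to verify that $\Phi(f)$ is essentially self-adjoint on the finite-particle domain (for instance via Nelson's analytic-vector theorem, the finite-particle vectors being analytic), so that $W(f) = \exp\{i\Phi(f)\}$ is well defined and $t \mapsto W(tf)$ is strongly continuous; the BCH step in assertion (1) and the differentiation step in assertion (2) are then justified either on the total set of analytic and coherent vectors or by the standard one-parameter-group arguments. Establishing this self-adjointness together with the density and totality of the relevant domains is the technical heart of the proof, after which both the Weyl relation and the irreducibility follow from the algebraic computations sketched above.
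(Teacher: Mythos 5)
The paper does not actually prove this proposition: it is quoted from Bratteli and Robinson, with a footnote deferring the proof to \cite{BR1997}, so there is no in-paper argument to match. What your proposal does is reconstruct, correctly, the standard proof from that reference, and the structure is sound: the CCRs $[a(f),a^*(g)]=(f,g)_J I$ on the finite-particle domain, the resulting c-number commutator $[\Phi(f),\Phi(g)]=i\,\text{Im}(f,g)_J\,I$, essential self-adjointness of $\Phi(f)$ via Nelson's analytic-vector theorem (the finite-particle vectors are analytic because the ladder operators grow only like $\sqrt{k}$ against number-operator bounds), Baker--Campbell--Hausdorff on that domain for assertion (1), and the commutant characterization of irreducibility for assertion (2): commuting with all $W(tf)$ gives commutation with the spectral measures of each $\Phi(f)$ by Stone's theorem, hence with $a(f)$ and $a^*(f)$ (your $\Phi(Jf)$ is the same as $\Phi(if)$, since $J$ implements multiplication by $i$ on $\mathcal{M}_J$), after which uniqueness of the joint kernel of the $a(f)$ and cyclicity of $\Omega$ force $T=cI$. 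Your irreducibility argument is in fact somewhat more streamlined than Bratteli--Robinson's own route (they extract the vacuum projector as a weak limit of averaged Weyl operators); both are standard, and yours buys a cleaner conceptual line at the cost of the domain bookkeeping you rightly flag as the technical heart. One point you should not gloss over: carried to the end, your own computation gives the multiplier $e^{-\frac{i}{2}\text{Im}(f,g)_J}$, since $c=\tfrac{1}{2}[i\Phi(f),i\Phi(g)]=-\tfrac{i}{2}\text{Im}(f,g)_J$; this is the Bratteli--Robinson relation, whereas the exponent $-\text{Im}(f,g)_J$ displayed in the proposition (missing both the $i$ and the $\tfrac{1}{2}$) is evidently a typographical slip in the paper --- as written it would make $W(f)W(g)$ a non-unimodular scalar multiple of the unitary $W(f+g)$, which is impossible unless $\text{Im}(f,g)_J=0$. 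So rather than saying your calculation yields ``the stated Weyl form,'' you should state the corrected phase explicitly and note the discrepancy; the corrected form is also the one consistent with $\text{Im}(f,g)_J=\sigma(f,g)$ from equation (\ref{eq:inproduct}) and with the abstract Weyl relation of Proposition \ref{weylprop}, whose exponent in Bratteli--Robinson is likewise $-\sigma(f,g)/2$.
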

So we have found an irreducible representation\footnote{The representation $(\pi_J, \mathcal{F}(\mathcal{M}_J))$ is also \emph{regular} in the sense that the unitary groups $t\mapsto \pi_J(W(tf))$ are strongly continuous for all $t\in \mathbb{R}$ and $f \in \mathcal{M}_J$.} $(\pi_J, \mathcal{F}(\mathcal{M}_J))$ of the Weyl algebra $\mathcal{W}[\mathcal{M},\sigma]$ on the Fock space constructed from our quantum one-particle system! Here, $\pi_J$ is just the Bogoliubov transformation that is associated with the complex structure $J$ via theorem (\ref{bog}). The vacuum expectation value of $W(f) \in \pi_J(\mathcal{W}[\mathcal{M},\sigma])$ is just
\begin{equation}
\langle \Omega, W(f)\Omega \rangle = e^{-(f,f)_J/4}, \qquad f\in \mathcal{M}.
\label{eq:vacexp}
\end{equation}
%Before giving an explicit formulation of the `standard' localization scheme, let me introduce the following definition that will prove useful for later considerations:
\subsection{The `standard' localization scheme} \label{standardloc}
Now we have everything we need to construct a `standard' localization scheme for quanta of the Klein-Gordon field.\footnote{Here, I will be largely following Halvorson in \cite{Halvorson2001}, preamble of Section 3.} Consider the assignment $G\rightarrow \mathcal{M}(G)$ of a spatial region $G \subseteq \mathbb{R}^3$ to the subset $\mathcal{M}(G) \subseteq \mathcal{M}_J$ of Cauchy data localized in $G$. Let $C_c^{\infty}(G)$ denote the subspace of smooth functions with support in $G$, then 
\begin{equation}
	\mathcal{M}(G)= C_c^{\infty}(G)\oplus C_c^{\infty}(G)
\end{equation}
is a real-linear subspace of $\mathcal{M}_J$. We say that the Weyl operator $W(f)$ acting on $\mathcal{F}(\mathcal{M}_J)$ is \emph{classically localized} in $G$ just in case $f\in \mathcal{M}(G)$.
\begin{remark}
The subalgebra $\mathcal{W}(G)= \pi(\mathcal{W}[\mathcal{M}(G),\sigma])'' \subseteq \pi(\mathcal{W}[\mathcal{M},\sigma])$ of operators classically localized in $G$ is the von Neumann algebra generated by Weyl operators classically localized in G. In other words: $\mathcal{W}(G)$ consists of linear combinations and limits\footnote{In the weak operator topology; see Section \ref{mathprelims}} of Weyl operators classically localized in $G$.
\end{remark}
With this von Neumann subalgebra in hand, we can make the following definition:
\begin{definition}
The \emph{standard localization scheme} is a net of local Weyl algebras $G\rightarrow \mathcal{W}(G)$, where $\mathcal{W}(G)$ is classically localized in $G\subseteq \mathbb{R}^3$.
\end{definition}
We can now check if this localization scheme obeys the criteria defined above.
\begin{itemize}
	\item In this scheme, our notion of a Weyl algebra $\mathcal{W}(G)$ localized in $G$ is directly deduced from the subspace of Cauchy data $\mathcal{M}(G)$ localized in $G$. Therefore, the net $G\rightarrow \mathcal{W}(G)$ satisfies \emph{isotony}.
	\item A similar argument holds for \emph{translation-rotation covariance}.
	\item If, for any $G_1, G_2 \subseteq \mathbb{R}^3$, $G_1 \cap G_2 = \emptyset$, then $S(G_1)$ will be symplectically orthogonal to $S(G_2)$. That is, the spatial integral in equation (\ref{eq:symp}) vanishes since each integrand contains functions of disjoint compact support. Therefore, the algebras $\mathcal{W}(G_1)$ and $\mathcal{W}(G_2)$ commute. It follows that the net $G\rightarrow \mathcal{W}(G)$ satisfies \emph{weak microcausality}.\footnote{Since the standard localization scheme only pertains to spatial regions, we need to consider weak microcausality in this context.}
\end{itemize}
So the standard localization scheme appears to satisfy our minimal set of requirements! But the good news ends there: As we will see in the next Section, this localization scheme is severely flawed.\\
\subsection{Why na\"{i}ve localization runs into trouble}
Let us get straight to the problem: The operator $H$ that was used in our definition of the complex structure J is \emph{anti-local} in the following sense:
\begin{definition}
An operator $A$ on some Hilbert space $\mathcal{H}(\mathbb{R}^3)$ of smooth functions on $\mathcal{R}^3$ is said to be \emph{anti-local} if, for any $f\in \mathcal{H}(\mathbb{R}^3)$ and for any open $G\in \mathbb{R}^3$, the two conditions 
\begin{enumerate}
	\item $\text{supp}(f)\cap G = \emptyset$
	\item $\text{supp}(Af)\cap G = \emptyset$
\end{enumerate}
are simultaneously satisfied only if $f \equiv 0$.
\end{definition}
In other words: an operator A is anti-local if it maps functions with compact support to functions with infinite tails. In \cite{SegalGoodman}, Segal and Goodman show the following:
\begin{lemma} $ $\\
The operator $H = \sqrt{-\nabla^2 + m^2}$ is anti-local.
\end{lemma}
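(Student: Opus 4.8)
The plan is to recast anti-locality as a \emph{unique continuation} statement for a constant-coefficient elliptic equation, and then invoke Holmgren's theorem. Suppose, for contradiction, that there is a nonzero $f$ and an open set $G\subseteq\mathbb{R}^3$ with $f|_G=0$ and $(Hf)|_G=0$; I want to conclude $f\equiv0$. The idea is to lift $f$ to a harmonic-type extension in an auxiliary ``time'' variable, so that the two vanishing hypotheses become precisely the vanishing of the Cauchy data of this extension on a piece of the boundary — something an elliptic unique-continuation principle forbids unless the extension, and hence $f$, is identically zero.

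First I would set $u(x,t)=(e^{-tH}f)(x)$ for $t>0$, using that $f\in C_c^{\infty}(\mathbb{R}^3)$ is Schwartz and therefore lies in the domain of every power of $H$, while $e^{-tH}$ acts in Fourier space as multiplication by the exponentially decaying factor $e^{-t\omega(\xi)}$, $\omega(\xi)=(\xi^2+m^2)^{1/2}$, so that $u(\cdot,t)$ is real-analytic in $x$ for each $t>0$. Since $\partial_t u=-Hu$ and $H^2=-\nabla^2+m^2$, one computes $\partial_t^2 u=H^2u=(-\nabla^2+m^2)u$, so $u$ solves the elliptic equation $(\partial_t^2+\nabla^2-m^2)u=0$ on the half-space $\{t>0\}$, where $\partial_t^2+\nabla^2$ is just the Laplacian on $\mathbb{R}^{4}$. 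By analytic hypoellipticity of a constant-coefficient elliptic operator, $u$ is jointly real-analytic in $(x,t)$; and because $e^{-t\omega(\xi)}\hat f(\xi)\to\hat f(\xi)$ in the Schwartz topology as $t\to0^{+}$, $u$ extends to $t=0$ with Cauchy data $u(\cdot,0)=f$ and $\partial_t u(\cdot,0)=-Hf$. The hypotheses say exactly that this Cauchy data vanishes on the open set $G\times\{0\}$. As a consistency check, locality of $(-\nabla^2+m^2)^j$ shows every $H^k f$ vanishes on $G$, so $u(x,\cdot)$ vanishes to infinite order in $t$ for each $x\in G$.

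The hyperplane $\{t=0\}$ is non-characteristic for $\partial_t^2+\nabla^2-m^2$, since its principal symbol $\tau^2+|\xi|^2$ is nonzero on the conormal direction $(\tau,\xi)=(1,0)$. I would therefore apply Holmgren's uniqueness theorem — legitimate because the coefficients are constant, hence analytic — to conclude that $u$ vanishes near $G\times\{0\}$; the identity theorem for real-analytic functions on the connected region $\{t>0\}$ then forces $u\equiv0$, whence $f=u(\cdot,0)\equiv0$, the desired contradiction. The main obstacle is the boundary analysis: unique continuation must be applied \emph{from the boundary} rather than the interior, which is most cleanly handled by extending $u$ by zero across $G\times\{0\}$ and verifying that the vanishing Cauchy data make this extension a weak, hence (by elliptic regularity) analytic, solution, so that interior unique continuation applies. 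The deeper reason the argument succeeds — the Segal--Goodman viewpoint — is the analyticity of the Fourier multiplier $\omega(\xi)=(\xi^2+m^2)^{1/2}$, whose only singularities are the branch points at $\xi=\pm im$; it is this analytic structure that drives the unique-continuation step and sharply distinguishes $H$ from the local operator $H^2=-\nabla^2+m^2$.
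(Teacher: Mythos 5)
Your proof is essentially correct, but note that the paper itself contains no proof of this lemma: it is imported wholesale from Segal and Goodman \cite{SegalGoodman}, so there is no in-paper argument to compare against. Measured against the literature, your route is the standard unique-continuation proof (the one usually attributed to Masuda and used in later expositions): lift $f$ to $u(x,t)=(e^{-tH}f)(x)$, observe that $u$ solves the elliptic equation $(\partial_t^2+\nabla^2-m^2)u=0$ on the half-space with Cauchy data $(f,-Hf)$ at $t=0$, and kill $u$ by unique continuation from the boundary piece $G\times\{0\}$ where both data vanish. Segal and Goodman's original argument is instead Fourier-analytic, exploiting the analytic continuation properties of the multiplier $\omega(\xi)=(\xi^2+m^2)^{1/2}$ directly -- a point your closing remark correctly identifies as the common mechanism behind both proofs. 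Your version buys self-containedness and generality: it works verbatim for $\sqrt{P}$ with $P$ any constant-coefficient elliptic operator with positive symbol, at the modest price of invoking analytic hypoellipticity and the identity theorem. You have also correctly spotted and handled the one genuinely delicate step, namely that Holmgren-type uniqueness must be applied \emph{from the boundary}: the zero-extension across $G\times\{0\}$ is a weak solution precisely because the two boundary terms in the integration by parts carry $u(\cdot,0)=f$ and $\partial_t u(\cdot,0)=-Hf$, which vanish on $G$ by hypothesis, and elliptic regularity then upgrades the extension to an analytic solution vanishing on the open set $G\times(-\epsilon,0)$, so the identity theorem on the connected half-space finishes the job. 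Two small remarks: for the boundary integration by parts you should note explicitly that $u$ is smooth up to $t=0$ (all $\partial_t^k u=(-H)^k e^{-tH}f$ converge uniformly since $H^kf$ is Schwartz, with $H^{2j+1}f=(-\nabla^2+m^2)^j(Hf)$); and since the paper applies the lemma to Cauchy data in $C_c^\infty(G)\oplus C_c^\infty(G)$, your restriction to $f\in C_c^\infty(\mathbb{R}^3)$ suffices, though the argument extends to Schwartz $f$ without change and to $L^2$ with a routine trace argument.
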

Segal and Goodman then go on to show that, because H is anti-local, the \emph{complex}-linear span of $\mathcal{M}(G)$ is dense in $\mathcal{M}_J$ for any open subset $G\subseteq \mathbb{R}^3$. Now: consider the algebra $\mathfrak{R}$ generated by $\{W(f): f\in E\}$ where $E \subseteq \mathcal{M}_J$ is a real-linear subspace. But, as proven in \cite{Petz}, $\Omega$ is cyclic for $\mathfrak{R}$ if and only if the complex-linear span of $E$ is dense in $\mathcal{M}_J$. Therefore, anti-locality of $H$ entails that the vacuum is cyclic for every local algebra. 
\begin{theorem}\emph{(Reeh-Schlieder in $\mathbb{R}^3$)}\\
Let G be proper subset of $\mathbb{R}^3$. Then, the vacuum $\Omega$ is cyclic with respect to any local algebra $\mathcal{W}(G)$ of the net $G\rightarrow \mathcal{W}(G)$.
\label{rsr3}
\end{theorem}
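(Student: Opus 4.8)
The plan is to reduce the cyclicity of the vacuum $\Omega$ for the local algebra $\mathcal{W}(G)$ to a density statement in the one-particle space $\mathcal{M}_J$, and then to discharge that density statement using the anti-locality of $H$. Recall that $\mathcal{W}(G)$ is by definition the von Neumann algebra generated by the Weyl operators $W(f)$ with $f \in \mathcal{M}(G) = C_c^\infty(G)\oplus C_c^\infty(G)$. By the criterion attributed to Petz in the discussion preceding the theorem, the vacuum is cyclic for the algebra generated by $\{W(f): f\in E\}$ precisely when the complex-linear span of the real subspace $E$ is dense in $\mathcal{M}_J$. Hence the theorem follows the moment I establish that the complex-linear span of $\mathcal{M}(G)$ is dense in $\mathcal{M}_J$ for every proper open $G\subseteq \mathbb{R}^3$, and it is this density that carries the real content of the argument.

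To prove the density I would compute the orthogonal complement with respect to the inner product $(\cdot,\cdot)_J$ of (\ref{eq:inproduct}) and show it is trivial. Since $(\eta,\cdot)_J$ is complex-linear in its second slot and multiplication by $i$ on $\mathcal{M}_J$ is implemented by $J$, a vector $\eta = g\oplus h$ is orthogonal to the entire complex span exactly when $(\eta,f)_J = 0$ for every $f\in\mathcal{M}(G)$. Writing $(\eta,f)_J = \sigma(\eta,Jf) + i\,\sigma(\eta,f)$ and separating real and imaginary parts, this one condition splits into $\sigma(\eta,f)=0$ and $\sigma(\eta,Jf)=0$ for all $f=\varphi\oplus\pi\in\mathcal{M}(G)$. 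Inserting the explicit symplectic form (\ref{eq:symp}) and the complex structure $J(\varphi\oplus\pi)=-H^{-1}\pi\oplus H\varphi$ of (\ref{compl}), and using self-adjointness of $H$, the two conditions become $\langle g,\pi\rangle = \langle h,\varphi\rangle = 0$ together with $\langle Hg,\varphi\rangle = \langle H^{-1}h,\pi\rangle = 0$, for all $\varphi,\pi\in C_c^\infty(G)$. Because $\varphi$ and $\pi$ range independently, I conclude that $g$ and $Hg$ both vanish on $G$, and likewise $h$ and $H^{-1}h$ both vanish on $G$.

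At this stage the anti-locality Lemma closes the argument: since $H$ is anti-local, the only function whose support and whose image under $H$ both avoid the open set $G$ is $g\equiv 0$, and the same reasoning applied to $H^{-1}$ forces $h\equiv 0$, so $\eta = 0$ and the complex span is dense. Feeding this back into the Petz criterion gives cyclicity of $\Omega$ for $\mathcal{W}(G)$, which is the assertion. I expect the genuine obstacle to lie entirely in the anti-locality input: the Lemma is stated only for $H$, whereas the imaginary-part condition also needs anti-locality of $H^{-1}=(-\nabla^2+m^2)^{-1/2}$, so I would either invoke the full Segal-Goodman result (which covers negative fractional powers of $-\nabla^2+m^2$) or deduce anti-locality of $H^{-1}$ from that of $H$. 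A secondary issue demanding care is the unbounded, nonlocal character of $H$: the step $\langle g,H\varphi\rangle = \langle Hg,\varphi\rangle$ presupposes that $g$ sits in an appropriate domain and that $Hg$ is meaningful as a distribution on $G$, which I would justify by arguing on the dense smooth vectors first and only then passing to the closure.
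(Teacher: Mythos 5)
Your proposal is correct and takes essentially the same route as the paper, which likewise establishes the theorem by combining the Petz cyclicity criterion with the Segal--Goodman density of the complex-linear span of $\mathcal{M}(G)$ in $\mathcal{M}_J$; the only difference is that you carry out the density proof explicitly via the orthogonal-complement computation, where the paper simply cites Segal and Goodman for it. Your flagged caveat about $H^{-1}$ is well placed but easily discharged: anti-locality of $H^{-1}$ follows directly from that of $H$, since if $h$ and $H^{-1}h$ both vanish on $G$, then $u := H^{-1}h$ satisfies that $u$ and $Hu = h$ both vanish on $G$, whence $u = 0$ and so $h = 0$.
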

Thus, local operations on the vacuum can approximate the entire Hilbert space $\mathcal{M}_J$! This suggests that our `standard' localization scheme does not comply with the laws of special relativity: if \emph{actions} confined to a spatial region cause \emph{effects} at a spacelike separation, then this implies superluminal, if not \emph{instantaneous} propagation of wave packets. This, as Fleming and Butterfield point out in \cite{FB1999}, ``...is certainly hard to square with na\"{i}ve, or even educated, intuitions about localization!''\\
However, this is not the only strange consequence of the Reeh-Schlieder theorem. I will now give three other strange properties of this localization scheme: The first pertaining to `standard' localization in general, the other two pertaining concretely to the localization of Klein-Gordon quanta in a spatial set.
\subsubsection{Entanglement of the vacuum}
Let $\mathcal{W}(G_1)$ and $\mathcal{W}(G_2)$ denote von Neumann algebras of operators classically localized in $G_1$ and $G_2$, respectively, where $G_1 \cap G_2 = \emptyset$ and $G_1, G_2 \subseteq \mathcal{R}^3$. Since the subspace $\mathcal{M}(G_1)$ and $\mathcal{M}(G_2)$ are symplectically orthogonal, we have $(f,g)_J = \text{Re}(f,g)_J$ for $f\in \mathcal{M}(G_1), \quad g\in \mathcal{M}(G_2)$. According to equation (\ref{eq:vacexp}), and using the Weyl CCRs (\ref{brat2}), we find that 
\begin{equation}
	\langle \Omega, W(f)W(g)\Omega \rangle = \langle \Omega, W(f)\Omega \rangle \langle \Omega, W(g) \Omega \rangle e^{-\text{Re}(f,g)_J}.
\end{equation}
If the vacuum were a product state across the spacelike separated regions $G_1$ and $G_2$, the real part of the inner product would be zero. However, this is generally not the case. I will make this obvious by the following: for $f= \varphi_0 \oplus \pi_0$ and $g = \varphi_1 \oplus \pi_1$, the integral given by
\begin{equation}
Re(f,g)_J = \sigma(f, Jg) = \int{d^3x\varphi_0(H\varphi_1)}+ \int{d^3x\pi_0(H^{-1}\pi_1)}
\end{equation}
does not generally vanish due to the anti-locality of H and its inverse. It follows that the vacuum is not a pure state across spacelike separated regions. In order to show that the vacuum is, in fact, \emph{entangled} across spacelike separated regions, let me make the following definition:\footnote{Following Clifton and Halvorson in \cite{CliftonHalvorson2000}.}
\begin{definition}
Let $O_1, O_2\subseteq M$ be spacelike separated double cones in Minkowski spacetime and let $\{O\rightarrow \mathfrak{R}(O): \mathfrak{R}(O) \subseteq \mathfrak{B}(H)\}$ denote a net of von Neumann algebras. A state $\omega$ is said to be \emph{entangled} across $O_1, O_2$ just in case the restriction of $\omega$ to the algebra
\begin{equation}
\mathfrak{R}_{O_1 O_2} := \{\mathfrak{R}(O_1)\cap \mathfrak{R}(O_2)\}''
\end{equation} falls outside the weak *-closure\footnote{The \emph{weak *-topology} on the state space of a von Neumann algebra $\mathfrak{R}$ is defined as follows: a sequence or net of states $\{\omega_n\}$  converges to a state $\omega$ just in case $\omega_n(Z) \rightarrow \omega(Z), \quad \forall Z\in \mathfrak{R}$.} of the convex hull of pure states on $\mathfrak{R}_{O_1 O_2}$.
\end{definition}
\begin{proposition} $ $\\
Let $\Omega \in \mathcal{F}(\mathcal{M}_J)$ be the vacuum state of the Fock space. The abstract state $\omega_{\Omega}$ on the algebra $\pi(\mathcal{W}[\mathcal{M}, \sigma])$ constructed via theorem (\ref{gns}) 
\begin{equation}
\omega_{\Omega} := \langle \Omega, \pi(W(f)) \Omega \rangle \quad \forall W(f) \in \pi(\mathcal{W}[\mathcal{M}, \sigma])
\end{equation}
is \emph{entangled} across $\mathcal{W}(G_1)$ and $\mathcal{W}(G_2)$, where $G_1, G_2 \in \mathbb{R}^3$ are spacelike.
\end{proposition}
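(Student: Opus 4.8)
The claim is that the restriction of $\omega_{\Omega}$ to the joint algebra $\mathfrak{R}_{G_1G_2}$ associated with the two regions cannot be approximated, in the weak *-topology, by convex combinations of product states across $(\mathcal{W}(G_1),\mathcal{W}(G_2))$ -- that is, $\omega_{\Omega}$ is \emph{nonseparable}. Since the standard scheme obeys weak microcausality (Assumption \ref{wm}), we have $\mathcal{W}(G_2)\subseteq\mathcal{W}(G_1)'$, so the two algebras commute and entanglement across the pair is well-posed. The plan is to argue in two stages: first that $\omega_{\Omega}$ is not even a product state, and then to upgrade this to genuine nonseparability.

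For the first stage I would use the correlation formula already derived. Because $G_1\cap G_2=\emptyset$, the subspaces $\mathcal{M}(G_1)$ and $\mathcal{M}(G_2)$ are symplectically orthogonal, so for $f\in\mathcal{M}(G_1)$ and $g\in\mathcal{M}(G_2)$ one has $\sigma(f,g)=0$, whence $\text{Im}(f,g)_J=0$ and $(f,g)_J=\text{Re}(f,g)_J$. Combining the vacuum expectation (\ref{eq:vacexp}) with the Weyl relations of Proposition \ref{brat2} then yields
\begin{equation}
\omega_{\Omega}(W(f)W(g)) = \omega_{\Omega}(W(f))\,\omega_{\Omega}(W(g))\,e^{-\text{Re}(f,g)_J}.
\end{equation}
The correction factor is generically not unity: with $f=\varphi_0\oplus\pi_0$ and $g=\varphi_1\oplus\pi_1$ one computes $\text{Re}(f,g)_J=\int d^3x\,\varphi_0(H\varphi_1)+\int d^3x\,\pi_0(H^{-1}\pi_1)$, which fails to vanish for suitable compactly supported data precisely because $H$ and $H^{-1}$ are anti-local (the anti-locality Lemma). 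Hence $\omega_{\Omega}$ factorizes on no pair of nontrivial local Weyl operators, so it is not a product state.

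The second stage is the crux, since a nonzero correlation is necessary but not sufficient for entanglement: a separable, merely classically correlated state can also violate factorization. To exclude separability I would invoke the Reeh-Schlieder theorem (Theorem \ref{rsr3}), which gives that $\Omega$ is cyclic for $\mathcal{W}(G_2)$; as $\mathcal{W}(G_2)\subseteq\mathcal{W}(G_1)'$, it is then cyclic for $\mathcal{W}(G_1)'$, and cyclicity for the commutant is equivalent to $\Omega$ being \emph{separating} for $\mathcal{W}(G_1)$. Thus $\Omega$ is simultaneously cyclic and separating for the nonabelian algebra $\mathcal{W}(G_1)$ and its commuting partner $\mathcal{W}(G_2)$. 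I would then appeal to the result of Summers and Werner, as deployed by Clifton and Halvorson, that a vector state cyclic and separating for such a pair of commuting nonabelian von Neumann algebras necessarily violates a Bell--CHSH inequality: one can select self-adjoint contractions $A_1,A_2\in\mathcal{W}(G_1)$ and $B_1,B_2\in\mathcal{W}(G_2)$, built from Weyl unitaries, whose correlation functional exceeds the classical bound $2$.

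The argument then closes because every separable state respects the CHSH bound; a state violating it cannot lie in the weak *-closed convex hull of product states and is therefore entangled across $(\mathcal{W}(G_1),\mathcal{W}(G_2))$. I expect the genuine obstacle to be exactly this last passage -- from the explicit two-point function to a true Bell violation in the infinite-dimensional Weyl setting, where finite-dimensional entanglement witnesses do not apply -- and the efficient way around it is to let the Reeh-Schlieder cyclicity established above supply the hypotheses of the Summers--Werner maximal-violation theorem, rather than constructing CHSH operators by hand. An equally viable alternative would be to note that $\omega_{\Omega}$ is quasi-free and to read nonseparability directly off the nonvanishing off-diagonal block $\text{Re}(f,g)_J$ of its covariance.
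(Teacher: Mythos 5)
Your proposal is correct in outline, but it reaches entanglement by a genuinely different route than the paper. The paper's own proof is an operational reductio: assuming $\omega_{\Omega}$ is not entangled, it argues that (i) local operations cannot create entanglement, (ii) Reeh-Schlieder cyclicity of $\Omega$ implies local operations on the vacuum approximate the entire state space on $\mathcal{W}_{G_1 G_2}$, so (iii) no state on the joint algebra would be entangled, whence by the Raggio-type remark stated just before the proposition the algebras $\mathcal{W}(G_1)$ and $\mathcal{W}(G_2)$ would have to be abelian --- contradicting the Weyl relations. Your first stage (failure of factorization via $\text{Re}(f,g)_J \neq 0$, traced to the anti-locality of $H$ and $H^{-1}$) reproduces the paper's preamble to the proposition, and you are right that it is necessary but not sufficient. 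Your second stage is where you diverge: you extract separating-ness of $\Omega$ for $\mathcal{W}(G_1)$ from cyclicity plus weak microcausality (a step the paper itself makes, but only \emph{after} the proof, in the observation about nonzero local projectors), and then close via Bell--CHSH violation in the style of Summers--Werner and Landau, using the fact that every separable state respects the CHSH bound. What this buys is a quantitatively stronger conclusion (Bell correlation, not mere nonseparability) and it replaces the paper's two unproven premises --- that local operations preserve nonentanglement, and the abelianness dichotomy of the remark --- with a single citable theorem; the CHSH bound plays exactly the role that the Raggio-type remark plays in the paper. One caution: the Summers--Werner \emph{maximal}-violation theorem concerns tangent wedges or double cones, so for arbitrary disjoint $G_1, G_2$ you should instead invoke Landau's theorem or the Halvorson--Clifton generic Bell correlation result, and verify their hypotheses explicitly: nonabelianness of each $\mathcal{W}(G_i)$ follows from nondegeneracy of $\sigma$ restricted to $\mathcal{M}(G_i)$, but the Schlieder-type independence property required there deserves a sentence rather than being absorbed into ``as deployed by Clifton and Halvorson'' (cf. \cite{CliftonHalvorson2000}). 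Your closing alternative --- reading nonseparability off the covariance of the quasi-free state --- is also viable but no less technical, since separability criteria for quasi-free states on Weyl algebras again require an argument, not just inspection of the off-diagonal block.
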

Before I give a proof of this proposition, we need to observe the following
\begin{remark}
Let $G_1, G_2 \in \mathbb{R}^3 $ be spacelike sets and $\mathcal{W}_{G_1 G_2} := \{\mathcal{W}(G_1)\cap \mathcal{W}(G_2)\}$. Then, the state space on the algebra $\mathcal{W}_{G_1 G_2}$ does not contain any entangled states if and only if $\mathcal{W}(G_1)$ and $\mathcal{W}(G_2)$ are \emph{abelian}.
\end{remark}
\begin{proof}
Let $\mathcal{W}(G_1)$ and $\mathcal{W}(G_2)$ as above where $G_1, G_2 \subseteq \mathbb{R}^3$ are spacelike, and assume that $\omega_{\Omega}$ is \emph{not} entangled across $\mathcal{W}(G_1)$ and $\mathcal{W}(G_2)$. Then, we can make the following argument:
\begin{enumerate}
	\item Local operations represented by the algebra $\mathcal{W}(G)$ cannot turn $\omega_{\Omega}$ into a state entangled across $\mathcal{W}(G_1)$ and  $\mathcal{W}(G_2)$.
	\item Cyclicity of $\Omega$ entails that local operations on $\omega_{\Omega}$ can approximate the entire state space on $\mathcal{W}_{G_1 G_2}$;
	\item Therefore, no state on $\mathcal{W}_{G_1 G_2}$ is entangled across $\mathcal{W}(G_1)$ and $\mathcal{W}(G_2)$;
	\item Therefore, $\mathcal{W}(G_1)$ and $\mathcal{W}(G_2)$ must be abelian.
\end{enumerate}
in contradiction to the fact that the (local) Weyl algebras $\mathcal{W}(G_1)$ and $\mathcal{W}(G_2)$ are non-abelian by construction. Therefore, $\omega_{\Omega}$ is entangled across $\mathcal{W}(G_1)$ and $\mathcal{W}(G_2)$ and it follows that the vacuum $\Omega$ entangled across spacelike $G_1, G_2 \in \mathbb{R}^3$. $\blacksquare$
\end{proof}
Additionally, we observe that since the net $\{\mathcal{W}(G): G\subseteq\mathbb{R}^3\}$ satisfies weak microcausality, the cyclicity of $\Omega$ entails that the vacuum is also \emph{separating} for any local algebra $\mathcal{W}(G)$, where $G' \neq \emptyset$. Therefore, if we let $A\in \mathcal{W}(G)$ be a nonzero projector representing the probability that an event will occur, the probability that this event will occur in the vacuum state is nonzero, since $A\Omega = 0$ would entail $A=0$, contradicting the fact that $A$ is nonzero.
\subsubsection{No local creation and annihilation operators}
Recall the definition given in Section (\ref{fundcrit}): An unbounded operator A is affiliated with a local algebra $\mathfrak{R}$ just in case unitary operations contained in the commutant $\mathfrak{R}'$ leave A unchanged. Since the annihilation and creation operators specified in (\ref{eq:create}) and (\ref{eq:annihilate}) are unbounded, we need to observe the following:
\begin{remark}
By construction, the operator $\Phi(f)$ is affiliated with $\mathcal{W}(G)$ just in case $W(f) \in \mathcal{W}(G)$.
\end{remark}
\begin{remark}
By definition, the family of operators affiliated with a local algebra $\mathcal{W}(G)$ is closed under taking adjoints.
\end{remark}
We know that annihilation operators $a(f)$ annihilate the vacuum, $a(f)\Omega = 0$. Therefore, (in the sense defined above) the operators $a(f)$ are \emph{not} affiliated with any local algebra $\mathcal{W}(G)$. Since the family of operators affiliated with $\mathcal{W}(G)$ is closed under taking adjoints, there are also no annihilation operators $a^*(f)$ affiliated with $\mathcal{W}(G)$.\\
Furthermore: using the anti-linearity of $a(f)$ and the linearity of $a^*(f)$, one can invert expression (\ref{eq:fieldop}):
\begin{equation}
a^*(f) = \frac{1}{\sqrt{2}}(\Phi(f) - i\Phi(if)), \qquad a(f) = \frac{1}{\sqrt{2}}(\Phi(f) + i\Phi(if)).
\label{eq:invfieldop}
\end{equation}
From this, we can see why problems arise from standard localization: the subspace of Cauchy data $\mathcal{M}(G)$ localized in a spatial region $G \subseteq \mathbb{R}^3$ is a \emph{real}-linear subspace of $\mathcal{M}_J$. However, from equation (\ref{eq:invfieldop}), we can see that $a^*(f)$ and $a(f)$ will be affiliated with a local Weyl algebra $\mathcal{W}(G)$ only if $f\in E$, where $E\subseteq \mathcal{M}_J$ is a \emph{complex}-linear subspace of the Hilbert space $\mathcal{M}_J$.
\subsubsection{No local number operators}
Similarly, there are no number operators affiliated with local algebras $\mathcal{W}(G)$. Two arguments can be given for this: First, we observe that the local number operators, defined by 
\begin{equation} 
N(f) = a^*(f)a(f), \quad \forall f \in \mathcal{M}(G), \quad G\in \mathbb{R}^3,
\end{equation}
also annihilate the vacuum. Furthermore, due to the anti-linearity of $a(f)$ and the linearity of $a^*(f)$, the number operator is invariant under phase transformations of $f$:
\begin{equation}
N(f) = N(e^{it}f), \quad \forall t \in \mathbb{R}.
\end{equation}
But: from the way we constructed classical localization of any $f\in \mathcal{M}$, it is obvious that the \emph{localization} of any f is not invariant under such a phase transformation. Therefore, it is not even possible to give a well-defined local number operator! Therefore, we can state the following.
\begin{result}
The standard localization scheme fails to satisfy the \emph{fundamentality} criterion for the existence of a particle ontology. 
\label{standard}
\end{result}

\section{The Newton-Wigner approach}\label{NWloc}
This of course, is an old hat: As early as 1949, Newton and Wigner\footnote{See \cite{NW1949}.} came up with a localization scheme that - at a first sight - appears to alleviate the problems of standard localization. In this Section,\footnote{cf. Halvorson in \cite{Halvorson2001}, pp. 14-16.} I will first report this localization scheme in the algebraic context and explain how it constitutes an improvement to the status of particles. However, I will then go on to show that the NW scheme is still subject to a generalized version of the Reeh-Schlieder theorem and thereby also fails to satisfy the \emph{fundamentality} criterion.
\subsection{First and second quantization, NW style}
In the previous Section, I constructed a representation of the Weyl algebra $\mathcal{W}[\mathcal{M}, \sigma]$ by endowing $\mathcal{M}$ with a complex structure $J$ and completing it with respect to the complex norm $(\cdot, \cdot)_J$. Thereby, we obtained a quantum one-particle system $(\mathcal{M}_J, D_t)$. In this subsection, I will construct another quantum one-particle system over $(\mathcal{M},\sigma)$ and thereby provide the \emph{Newton-Wigner} representation of $\mathcal{W}[\mathcal{M}, \sigma]$.\\
Given our symplectic structure $(\mathcal{M},\sigma, D_t)$, let us introduce an injective mapping $K: \mathcal{M} \hookrightarrow L^2(\mathbb{R}^3)$ such that, for every $f:= \varphi\oplus \pi \in \mathcal{M}$:
\begin{equation}
K(\varphi\oplus \pi)= \frac{1}{\sqrt{2}}(H^{1/2}\varphi + iH^{-1/2}\pi).
\end{equation}
Clearly, the complex-linear span of $K(\mathcal{M})$ is dense in $L^2(\mathbb{R}^3)$. With regard to the symplectic form, we observe the straightforward calculation: For all $f:=\varphi_0 \oplus \pi_0, g:= \varphi_1\oplus \pi_1 \in \mathcal{M}$,
\begin{eqnarray}
	2\text{Im}(Kf,Kg)_{L^2}& = &2\text{Im}\left \{  \int_{\mathbb{R}^3}   Kf(Kg)^* \mathrm d^3x \right \} \\
																	 & = & \int_{\mathbb{R}^3} \left(\varphi_0 \pi_1 - \varphi_1 \pi_0 \right ) \mathrm d^3x \\
																	 & = & \sigma (f,g).
\end{eqnarray}
Furthermore, it can be shown rigorously that $K$ gives rise to a one-parameter group of unitary time evolution on $L^2(\mathbb{R}^3)$. Here, it should suffice to motivate this with the following: Let $f,g \in \mathcal{M}$. Because $D_t$ is a symplectomorphism, we have
\begin{eqnarray}
\nonumber 
	\sigma (f, g) = \sigma (D_t f, D_t g)& = & 2i \left \{ \text{Re}(K D_t f, K D_t g)_{L^2} - (K D_t f, K D_t g)_{L^2} \right \} \\  
	              			                 & = & 2i \left \{ \text{Re}(K f, K g)_{L^2} - (K f, K g)_{L^2} \right \} \\ \nonumber 
											                 & = & 2i \left \{ \text{Re}(U_t K f, U_t  K g)_{L^2} - (U_t K f,U_t K g)_{L^2} \right \}.
\end{eqnarray}
That is, $KD_t = U_tK$ for a suitable unitary transformation $U_t$ on $L^2(\mathbb{R}^3)$, where the generator $H$ of $U_t$ is a positive operator (recall Section \ref{poshamiltonian}). Therefore, we arrive at a quantum one-particle system $(K, L^2(\mathbb{R}^3), U_t)$.
\begin{remark}
The quantum one-particle systems $(K, L^2(\mathbb{R}^3), U_t)$ and $(J, \mathcal{M}_J, D_t)$ are unitarily equivalent. However, unlike before where $\mathcal{M} \subseteq \mathcal{M}_J$,  $\mathcal{M} \nsubseteq L^2(\mathbb{R}^3)$. 
\end{remark}
The real-linear invertible mapping $K$ gives rise to a \emph{Bogoliubov}-Transformation on $\mathcal{W}[\mathcal{M},\sigma]$ via theorem (\ref{bog}). Thereby, we obtain the representation 
\begin{equation}
\alpha_K \left (\mathcal{W}[\mathcal{M},\sigma] \right )  = \mathfrak{B}(L^2(\mathbb{R}^3))
\label{eq:NW}
\end{equation}
of $\mathcal{W}[\mathcal{M},\sigma]$. The second quantization proceeds in analogy to Section (\ref{2nd}). The Fock space is given by
\begin{equation}
\mathcal{F}[L^2(\mathbb{R}^3)] = \mathbb{C}\oplus \mathcal{H} \oplus \mathcal{H}^2 \oplus \mathcal{H}^3...,
\label{eq:fockl2}
\end{equation}
where $\mathcal{H} := L^2(\mathbb{R}^3)$. Again, the vacuum is given by the unique translation-invariant state
\begin{equation}
\Omega = 1 \oplus 0 \oplus 0 \oplus ... \quad \in \mathcal{F}[L^2(\mathbb{R}^3)]
\end{equation}
and we can define creation and annihilation operators $a^*(f), a(f)$ as above, where $f\in L^2(\mathbb{R}^3)$. By proposition (\ref{brat2}) we have a representation $(\alpha_K, \mathcal{F}[L^2(\mathbb{R}^3)])$ of the Weyl algebra $\mathcal{W}[\mathcal{M},\sigma]$. The vacuum expectation value of any $W(f) \in \alpha_K \left (\mathcal{W}[\mathcal{M},\sigma]) \right )$ is just
\begin{equation}
\langle \Omega, W(f) \Omega \rangle = e^{-(f,f)_{L^2}/4}.
\label{eq:vacexpNW}
\end{equation}
\subsection{The NW localization scheme}
Again, we can use this representation to construct a localization scheme. In analogy to Section (\ref{standardloc}), consider the assignment $G \rightarrow L^2(G)$ of a spatial region $G \subseteq \mathbb{R}^3$ to the subset $L^2(G) \subseteq L^2(\mathbb{R}^3)$ of square-integrable functions with probability amplitude vanishing (almost everywhere) outside $G$. We say that an operator $W(f)$ acting on $\mathcal{F}[L^2(\mathbb{R}^3)]$ is \emph{NW-localized} in $G$ just in case $f \in L^2(G)$. 
\begin{remark} 
The subalgebra $\mathcal{W}_{NW}(G) = \{W(f):f \in L^2(G)\}''$ of operators NW-localized in G is the von Neumann algebra generated by Weyl operators NW-localized in G.
\end{remark}
\begin{definition}
The \emph{Newton-Wigner localization scheme} is a net of Weyl algebras $G\rightarrow \mathcal{W}_{NW}(G)$, where $\mathcal{W}_{NW}(G)$ is NW-localized in $G\subseteq \mathbb{R}^3$. 
\end{definition}
Just like the standard localization scheme, this NW localization pertains to \emph{spatial} localization. Therefore, let us first check whether this localization scheme satisfies assumptions (\ref{isotony}), (\ref{trcovariance}) and (\ref{wm}).
\begin{itemize}
	\item By construction, the NW localization scheme obeys \emph{isotony}.
	\item It is straightforward to check that the NW algebras are covariant under actions of the Euclidean group and hence satisfies \emph{translation-rotation covariance}.\footnote{In fact, Newton and Wigner constructed the ``position eigenstates'' starting from the requirement that actions of the Euclidean group would render localized states orthogonal to one another; see \cite{NW1949}.}
	\item Since for $G_1\cap G_2 =\emptyset$, $L^2(G_1)$ and $L^2(G_2)$ are orthogonal with respect to the inner product on $L^2(\mathbb{R}^3)$, this localization scheme satisfies \emph{weak microcausality}.
\end{itemize}
As we shall see, the advantages of the Newton-Wigner approach are largely due to the properties of the Hilbert space $L^2(\mathbb{R}^3)$. Let me first show how the NW localization scheme is \emph{not} subject to the Reeh-Schlieder theorem in $\mathbb{R}^3$:
\begin{proposition} $ $\\
Let $\{\mathcal{W}_{NW}(G): G\subseteq \mathbb{R}^3\}$ denote the NW localization scheme in $\mathbb{R}^3$ that satisfies assumptions (\ref{isotony}), (\ref{trcovariance}) and (\ref{wm}). Then, the vacuum $\Omega \in \mathcal{F}[L^2(\mathbb{R}^3)]$ is \emph{not} cyclic with respect to $\mathcal{W}_{NW}(G)$ for all $G \subseteq \mathbb{R}^3$.
\end{proposition}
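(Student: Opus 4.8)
The plan is to reduce the question of cyclicity to a density statement about the one-particle subspace, and then to exploit the fact that — unlike in the na\"{i}ve scheme — the relevant subspace is already \emph{complex}-linear. Recall from the discussion preceding Theorem \ref{rsr3} that, following \cite{Petz}, for a von Neumann algebra generated by Weyl operators $\{W(f): f\in E\}$ indexed by a subspace $E$ of the one-particle Hilbert space, the vacuum is cyclic if and only if the complex-linear span of $E$ is dense in that one-particle space. Here the one-particle space is $L^2(\mathbb{R}^3)$ and the generating set is indexed by $E = L^2(G)$, so the first step is simply to invoke this criterion: $\Omega$ is cyclic for $\mathcal{W}_{NW}(G)$ if and only if the complex-linear span of $L^2(G)$ is dense in $L^2(\mathbb{R}^3)$.

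The decisive step is then the observation that $L^2(G)$ — the square-integrable functions vanishing almost everywhere outside $G$ — is \emph{itself} a complex-linear subspace: multiplication by any complex scalar (in particular by $i$) preserves the support condition, and $L^2(G)$ is already norm-closed, being the kernel of restriction to $\mathbb{R}^3\setminus G$. Consequently the complex-linear span of $L^2(G)$ coincides with $L^2(G)$ itself. For any proper $G$ — more precisely whenever $\mathbb{R}^3\setminus G$ has positive Lebesgue measure — there exist nonzero functions supported there, orthogonal to all of $L^2(G)$; hence $L^2(G)$ is a proper closed subspace and is not dense. By the criterion above, $\Omega$ therefore fails to be cyclic. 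I would close by underlining the contrast with the na\"{i}ve scheme: there the localized Cauchy data $\mathcal{M}(G) = C_c^{\infty}(G)\oplus C_c^{\infty}(G)$ form only a \emph{real}-linear subspace, and the complex structure $J$ — built from the anti-local operator $H = \sqrt{-\nabla^2 + m^2}$ (cf. \cite{SegalGoodman}) — smears supports, so the complex span becomes dense and Reeh--Schlieder applies. In the NW scheme the complex structure is ordinary multiplication by $i$ on $L^2(\mathbb{R}^3)$, which is manifestly local, so no such spreading occurs.

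The point requiring most care is the transfer of the Petz cyclicity criterion to the present one-particle space: the statement in Section \ref{naiveloc} is phrased for $\mathcal{M}_J$, so I would note that the Fock-space and Weyl-algebra construction of Section \ref{2nd} is identical once $\mathcal{M}_J$ is replaced by $L^2(\mathbb{R}^3)$, whence the criterion carries over verbatim. The only remaining subtlety is measure-theoretic: one must read ``proper subset'' as ``complement of positive measure'' for $L^2(G)$ to be genuinely proper, since a null complement would give $L^2(G) = L^2(\mathbb{R}^3)$, and one should record that $L^2(G)$ is closed so that failure of density is equivalent to being a proper subspace. Neither of these is a serious obstacle; the essential content is the complex-linearity of $L^2(G)$, which is precisely the structural feature that na\"{i}ve localization lacked.
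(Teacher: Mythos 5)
Your proof is correct, but it takes a genuinely different route from the paper's. You reduce cyclicity to the Petz criterion -- $\Omega$ is cyclic for the algebra generated by $\{W(f): f\in E\}$ iff the complex-linear span of $E$ is dense in the one-particle space -- and then observe that $E = L^2(G)$ is already a closed \emph{complex}-linear subspace, so its complex span is itself and is proper whenever $\mathbb{R}^3\setminus G$ has positive measure. The paper instead argues via the tensor factorization of Fock space: from $L^2(\mathbb{R}^3) = L^2(G)\oplus L^2(G')$ it obtains $\mathcal{F}[L^2(\mathbb{R}^3)] \cong \mathcal{F}_G \otimes \mathcal{F}_{G'}$, notes that the vacuum is a product state $\Omega_G \otimes \Omega_{G'}$ and that $\mathcal{W}_{NW}(G)$ acts as $\mathfrak{B}\otimes I$ on the first factor, so that local operations on $\Omega$ can only reach $\mathcal{F}_G \otimes \Omega_{G'}$, a proper subspace. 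Your approach is more economical and makes the structural diagnosis sharper: it is exactly the same lemma that drove Theorem \ref{rsr3} for the standard scheme, so the contrast between the two schemes is located precisely in real- versus complex-linearity of the localized subspace and in the anti-locality of $H$ versus the locality of multiplication by $i$ -- a contrast you draw explicitly and which the paper only makes later. The paper's factorization argument, on the other hand, buys byproducts that are reused downstream: the product structure of the vacuum across $G$ and $G'$ underwrites the subsequent claim that the NW vacuum is unentangled, and the identification $\mathcal{W}_{NW}(G) \cong \mathfrak{B}(L^2(G))\otimes I$ with the factorized Fock space feeds directly into the Clifton et al.\ argument (Theorem 4.4) that a dense set of \emph{other} cyclic vectors survives. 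Your measure-theoretic caveats (reading ``proper'' as ``complement of positive measure'', recording that $L^2(G)$ is closed, and transferring the Petz criterion from $\mathcal{M}_J$ to $L^2(\mathbb{R}^3)$) are all apt, and if anything are handled more carefully than in the paper, which leaves the corresponding point about $G' \neq \emptyset$ implicit.
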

\begin{proof}
Let $G \subseteq \mathbb{R}^3$ be an open set. Then, for $G' =\mathbb{R}^3 \setminus G$ we can decompose $L^2(\mathbb{R}^3)$ in the following way:
\begin{equation}
 L^2(\mathbb{R}^3) = L^2(G \cup G') = L^2(G) \oplus L^2(G').
 \end{equation}
Denoting $\mathcal{F}_G = \mathcal{F}[L^2(G)]$ and $\mathcal{F}_{G'}= \mathcal{F}[L^2(G')]$, it follows from the associativity of the direct sum that the Fock space over $L^2(\mathbb{R}^3)$ decomposes into
\begin{equation}
\mathcal{F}[L^2(\mathbb{R}^3)] \cong \mathcal{F}_G \otimes \mathcal{F}_{G'}.
\label{eq:prodstate}
\end{equation}
Assume that $\Omega$ is cyclic for $\mathcal{W}_{NW}(G)$, that is $\text{span}_{\mathbb{C}}\overline{\{A\Omega: A \in \mathcal{W}_{NW}(G)\}} = \mathcal{F}[L^2(\mathbb{R}^3)]$.
Then, we have the following argument:  
\begin{enumerate}
	\item Because of equation (\ref{eq:prodstate}) the vacuum vector $\Omega \in \mathcal{F}[L^2(\mathbb{R}^3)]$ is a product state across $G$ and $G'$, that is 
		\begin{equation}
			\Omega = \Omega_G \oplus \Omega_{G'},
		\end{equation} 
		where $\Omega_G \in \mathcal{F}_G $ and $\Omega_{G'} \in \mathcal{F}_{G'}$. 
	\item We can express $\mathcal{W}_{NW}(G)$ as the set of bounded operators acting exclusively on $L^2(G)$, \begin{equation}\mathcal{W}_{NW}(G) \cong \mathfrak{B}(L^2(G)) \otimes I.\end{equation}
	\item Therefore, $\text{span}_{\mathbb{C}}\overline{\{A\Omega: A \in \mathcal{W}_{NW}(G)\}} = \mathcal{F}_G \otimes \Omega_{G'}$,
\end{enumerate}
in contradiction to the assumption. Therefore, $\Omega$ is not cyclic for the Newton-Wigner localization scheme. $\blacksquare$
\end{proof}
Furthermore, the Newton-Wigner scheme comes with a number of advantages over the `standard' localization scheme:\\
\noindent\hspace*{5mm}\textbf{1.}	As we have seen, for NW localization, the vacuum is a product state across disjoint regions in $\mathbb{R}^3$. Another way to see this is by observing that NW operators local in disjoint regions of space will commute with respect to $(\cdot,\cdot)_{L^2}$, the inner product on $L^2(\mathbb{R}^3)$. Therefore, the vacuum expectation value (\ref{eq:vacexpNW}) of $W(f)W(g)$ is just
\begin{equation}
\langle \Omega, W(f)W(g) \Omega \rangle = \langle \Omega, W(f) \Omega \rangle\langle \Omega, W(g) \Omega \rangle,
\end{equation}
where $W(f) \in \mathcal{W}_{NW}(G)$ and $W(g) \in \mathcal{W}_{NW}(G')$ and $G\cap G' = \emptyset$.\\
\noindent\hspace*{5mm}\textbf{2.} Next, we can look at the annihilation and creation operators. Recall equation (\ref{eq:invfieldop}):
\begin{equation}
a^*(f) = \frac{1}{\sqrt{2}}(\Phi(f) - i\Phi(if)), \qquad a(f) = \frac{1}{\sqrt{2}}(\Phi(f) + i\Phi(if)).
\end{equation}
Clearly, $L^2(G)$ is a \emph{complex}-linear subspace of $L^2(\mathbb{R}^3)$. Therefore, for $f\in L^2(G)$, \emph{both} $\Phi(f)$ and $\Phi(if)$ are affiliated with $\mathcal{W}_{NW}(G)$. It follows that the family of creation and annihilation operators is affiliated with $\mathcal{W}_{NW}(G)$ just in case $f \in L^2(G)$. In other words: the creation and annihilation of KG-quanta is a local operation!\\
\noindent\hspace*{5mm}\textbf{3.} A similar argument holds for the number operators constructed from the above annihilation and creation operators. Since the family of annihilation and creation operators generated by $a^*(f), a(f)$ is affiliated with the NW-local algebra $\mathcal{W}_{NW}(G)$ for all $f \in L^2(G)$, the number operator
\begin{equation}
N(f) := a^*(f)a(f)
\end{equation}
is also affiliated with $\mathcal{W}_{NW}(G)$. Therefore, the \emph{total number of particles in a spatial set} is locally measurable! So at first glance, Newton Wigner appears to satisfy the \emph{fundamentality} criterion for particles in QFT. However, as we shall see in the next Section, this conclusion is premature.
\subsection{Is Newton-Wigner immune against the Reeh-Schlieder?}
First of all, it is a welcome property that the vacuum in the NW scheme is not cyclic with respect to any NW algebra local in a spatial set. However, it is not clear why it would be less unsettling for any other vector in $\mathcal{F}[L^2(\mathbb{R}^3)]$ to be cyclic for any NW-local algebra. In fact, as I will show in a moment, there is still a dense set of vectors in $\mathcal{F}[L^2(\mathbb{R}^3)]$ that are cyclic for any NW-local algebra. Subsequently, I will show that the Newton-Wigner localization scheme is subject to the Reeh-Schlieder theorem formulated in terms of double cones in Minkowski spacetime.
\subsubsection{Cyclic vectors for the spatial NW scheme}
First of all, let me state the following theorem:
\begin{theorem}\emph{(Clifton et al.)}\\
Let $\mathcal{F} \cong \mathcal{H}^0 \otimes \mathcal{H}^1 \otimes \mathcal{H}^2 \otimes ... \otimes \mathcal{H}^n$ denote a state space where $n > 1$ and each $\mathcal{H}$ is separable and has nontrivial $(>1)$ dimension. Then, the following is equivalent:
\begin{enumerate}
	\item There exists a cyclic vector in $\mathcal{F}$.
	\item All $H^k$ have the same dimension, and for $n>2$, their common dimension is infinite.
	\item The set of cyclic vectors is dense in $\mathcal{F}$.
\end{enumerate}
\end{theorem}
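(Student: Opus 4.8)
The plan is to read the word ``cyclic'' in the strong, symmetric sense that matches condition (2): a vector $\Psi\in\mathcal{F}$ is cyclic precisely when it is cyclic for the local factor algebra $\mathfrak{R}_k:=I\otimes\cdots\otimes\mathfrak{B}(\mathcal{H}^k)\otimes\cdots\otimes I$ of \emph{every} slot $k$ simultaneously. The first move is therefore to establish one structural lemma that trades cyclicity for a spectral condition on reduced states. Writing the complement of slot $k$ as $\mathcal{K}_{\hat k}:=\bigotimes_{j\neq k}\mathcal{H}^j$, the algebra $\mathfrak{R}_k$ has commutant $\mathfrak{R}_k'=I\otimes\mathfrak{B}(\mathcal{K}_{\hat k})$, and by the standard duality (a vector is cyclic for a von Neumann algebra iff it is separating for its commutant) $\Psi$ is cyclic for $\mathfrak{R}_k$ iff it is separating for $\mathfrak{R}_k'$. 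Schmidt-decomposing $\Psi$ across the cut $k\mid\hat k$ as $\Psi=\sum_i\sqrt{\lambda_i}\,e_i\otimes f_i$ (with $\lambda_i>0$ and $\{e_i\}\subset\mathcal{H}^k$, $\{f_i\}\subset\mathcal{K}_{\hat k}$ orthonormal), one checks directly that $\Psi$ is separating for $\mathfrak{R}_k'$ iff $\{f_i\}$ spans $\mathcal{K}_{\hat k}$, i.e.\ iff the reduced operator $\mathrm{Tr}_{\mathcal{H}^k}\,|\Psi\rangle\langle\Psi|$ is faithful, i.e.\ iff the Schmidt rank across the cut attains its maximal value $\dim\mathcal{K}_{\hat k}=\prod_{j\neq k}\dim\mathcal{H}^j$. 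So ``$\Psi$ cyclic'' is reduced to ``$\Psi$ has full Schmidt rank across every one of the $n+1$ cuts''.

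With this lemma the implication $(3)\Rightarrow(1)$ is immediate (a dense set is nonempty), and $(1)\Rightarrow(2)$ becomes pure cardinal arithmetic. Since any Schmidt rank is bounded by $\min(\dim\mathcal{H}^k,\prod_{j\neq k}\dim\mathcal{H}^j)$, full rank at cut $k$ forces $\dim\mathcal{H}^k\geq\prod_{j\neq k}\dim\mathcal{H}^j$, and I would simply run these inequalities against one another. For two factors they give $\dim\mathcal{H}^0=\dim\mathcal{H}^1$; for three or more finite factors they collapse (e.g.\ $d_0\geq d_1d_2\geq d_0d_2^2$ forces $d_2\leq1$, excluded since each dimension is $>1$), so all dimensions must be infinite, hence equal to the common separable value $\aleph_0$, for which $\aleph_0\geq\aleph_0^{\,n}=\aleph_0$ holds. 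This is exactly (2).

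For $(2)\Rightarrow(3)$ I would split into the finite and infinite regimes. In the finite regime only $n=2$ survives, so $\mathcal{F}\cong\mathbb{C}^d\otimes\mathbb{C}^d$; identifying $\Psi$ with a $d\times d$ matrix $M$ via $\Psi=\sum_{ij}M_{ij}\,e_i\otimes f_j$, the Schmidt rank equals $\mathrm{rank}\,M$, so ``cyclic'' means $\det M\neq0$. The non-cyclic set is then the zero locus of a nonzero polynomial, which is closed with empty interior, so the cyclic vectors are open and dense --- giving (3) at once.

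The infinite-dimensional regime is where the real work lies, and I expect it to be the main obstacle. Here every factor is $\cong\ell^2$ and each complement $\mathcal{K}_{\hat k}$ is again separable. Translating the lemma, $\Psi$ is cyclic iff, for each direction $k$, the ``rows'' $R^{(k)}_i:=(\langle e_i|\otimes I)\Psi\in\mathcal{K}_{\hat k}$ have dense span (trivial kernel of the associated reshaping operator). The goal is to show such $\Psi$ are dense: given $\Psi_0$ and $\epsilon>0$, I would add a small correction supported on a reservoir of high-index coordinates, chosen so that in \emph{each} of the $n+1$ directions the row family is pushed to approximate every element of a fixed countable dense subset of $\mathcal{K}_{\hat k}$, while keeping the $\ell^2$ norm of the correction below $\epsilon$. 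The delicate points are (i) that ``faithful'' in infinite dimensions means only trivial kernel of a compact reduced operator whose eigenvalues necessarily tend to $0$, so one cannot demand genuine invertibility, and (ii) the simultaneous bookkeeping that makes one correction serve all $n+1$ cuts at once. A cleaner packaging, which I would use to finish, is to exploit that cyclicity is preserved under the local group $\prod_k GL(\mathcal{H}^k)$ and to organise the argument through Baire category: each set $C_k$ of vectors cyclic at cut $k$ is a dense $G_\delta$ (the one nonobvious step being lower semicontinuity of the relevant distance functions), and since $\mathcal{F}$ is complete the finite intersection $C=\bigcap_k C_k$ is again dense, yielding (3) and closing the cycle $(1)\Rightarrow(2)\Rightarrow(3)\Rightarrow(1)$.
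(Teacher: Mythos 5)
Your proposal is essentially correct, but note first that the paper itself offers no proof of this theorem: its proof environment consists of the single line ``(see Clifton et al. in \cite{Clifton1998})'', so the comparison can only be with that cited argument, not with anything in the essay. Your reconstruction rests on the right key lemma: reading ``cyclic'' symmetrically (cyclic for every single-slot algebra $\mathfrak{R}_k$ simultaneously, which is the reading forced by the symmetry of condition (2) and is the notion of \cite{Clifton1998}), invoking the cyclic/separating duality with the commutant $\mathfrak{R}_k' = I\otimes\mathfrak{B}(\mathcal{K}_{\hat k})$, and converting cyclicity at the cut $k$ into maximal Schmidt rank, i.e.\ faithfulness of the reduced density operator on $\mathcal{K}_{\hat k}$. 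Your cardinal arithmetic for $(1)\Rightarrow(2)$ is then sound, with one small omission: you run the inequalities $d_k \geq \prod_{j\neq k} d_j$ against each other only in the all-finite case; the mixed case (one factor finite, another infinite) must also be excluded, which the same inequality does immediately, since a finite $d_k$ cannot dominate a product containing an infinite factor. The determinant argument in the finite bipartite case is fine as stated.

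In the infinite regime, your Baire-category packaging works and is in fact cleaner than you fear. The set $C_k$ of vectors cyclic at cut $k$ is \emph{automatically} a $G_\delta$: fixing a countable dense sequence $\{u_n\}\subseteq\mathcal{F}$, one has
\begin{equation*}
C_k \;=\; \bigcap_{n,m}\;\bigcup_{A\in\mathfrak{R}_k}\;\bigl\{\Psi : \|A\Psi - u_n\| < \tfrac{1}{m}\bigr\},
\end{equation*}
a countable intersection of unions of open sets, so no semicontinuity lemma is needed. What your sketch leaves genuinely unfinished is the density of each single $C_k$, and here there is a standard argument that replaces your ``reservoir of high-index coordinates'': identify $\Psi\in\mathcal{H}^k\otimes\mathcal{K}_{\hat k}$ with a Hilbert--Schmidt operator via its Schmidt data, complete the two singular families to full orthonormal bases indexed by $\mathbb{N}$ (possible precisely because hypothesis (2) gives $\dim\mathcal{H}^k = \dim\mathcal{K}_{\hat k} = \aleph_0$, so this step \emph{uses} (2)), and replace each singular value $s_i$ by $\max(s_i,\epsilon 2^{-i})$; the perturbation has norm at most $\epsilon$ and the result has faithful reduced operator, hence lies in $C_k$. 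Since the Baire theorem then handles the finite intersection $\bigcap_k C_k$, the ``simultaneous bookkeeping'' you worry about never arises. With these two details filled in, your cycle $(1)\Rightarrow(2)\Rightarrow(3)\Rightarrow(1)$ closes and reproduces the content of the cited result; the off-by-one oddity in the indexing ($\mathcal{H}^0$ through $\mathcal{H}^n$ versus the conditions ``$n>1$'', ``$n>2$'') is a defect of the paper's transcription, not of your argument.
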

\begin{proof} 
(see Clifton et al. in \cite{Clifton1998}). $\blacksquare$
\end{proof}
Using this, we can show that while the vacuum is not cyclic for the spatial NW localization scheme, there is still a dense set of cyclic vectors for any NW-local algebra:
\begin{theorem} $ $\\
Let $\{\mathcal{W}_{NW}(G): G \subseteq \mathbb{R}^3\}$ denote the NW localization scheme. For any $G \subseteq \mathbb{R}^3$, $\mathcal{W}_{NW}(G)$ has a dense set of cyclic vectors in $\mathcal{F}[L^2(\mathbb{R}^3)]$.
\end{theorem}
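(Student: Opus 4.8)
The plan is to reduce the statement to the tensor-product genericity theorem of Clifton et al.\ just quoted, by exhibiting the Fock space as a tensor product of factors on one of which $\mathcal{W}_{NW}(G)$ acts as the \emph{full} algebra of bounded operators. First I would reuse the splitting established in the proof that $\Omega$ is not cyclic: writing $G' = \mathbb{R}^3 \setminus G$ and $L^2(\mathbb{R}^3) = L^2(G) \oplus L^2(G')$, the exponential law for bosonic Fock spaces gives $\mathcal{F}[L^2(\mathbb{R}^3)] \cong \mathcal{F}_G \otimes \mathcal{F}_{G'}$, with $\mathcal{F}_G = \mathcal{F}[L^2(G)]$ and $\mathcal{F}_{G'} = \mathcal{F}[L^2(G')]$. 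Under this isomorphism the Weyl operators NW-localized in $G$ act only on the first factor, and because the Fock representation of the CCR over $L^2(G)$ is irreducible (Proposition \ref{brat2}), their weak closure is all of $\mathfrak{B}(\mathcal{F}_G)$; hence $\mathcal{W}_{NW}(G) \cong \mathfrak{B}(\mathcal{F}_G) \otimes I$.

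Next I would check the dimension hypotheses of the cited theorem. For a proper open $G$ both $G$ and $G'$ carry positive Lebesgue measure, so $L^2(G)$ and $L^2(G')$ are each separable and infinite-dimensional, and therefore so are the bosonic Fock spaces $\mathcal{F}_G$ and $\mathcal{F}_{G'}$. Thus the factors share the same (infinite) dimension, which is precisely the situation in which Clifton et al.\ guarantee both the existence and the density of cyclic vectors. If one wishes to meet the hypothesis on the number of factors literally, one may further partition $G'$ into two subsets of positive measure, so that $\mathcal{F}[L^2(\mathbb{R}^3)] \cong \mathcal{F}_G \otimes \mathcal{F}_{G_1'} \otimes \mathcal{F}_{G_2'}$ is a product of three separable, infinite-dimensional factors; the identification $\mathcal{W}_{NW}(G) \cong \mathfrak{B}(\mathcal{F}_G) \otimes I \otimes I$ is unaffected.

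Finally I would invoke condition (3) of the theorem: since the common dimension is infinite, the set of vectors cyclic for each local factor algebra is dense in $\mathcal{F}[L^2(\mathbb{R}^3)]$. Any such vector is in particular cyclic for the first-factor algebra $\mathfrak{B}(\mathcal{F}_G) \otimes I \cong \mathcal{W}_{NW}(G)$, so $\mathcal{W}_{NW}(G)$ possesses a dense set of cyclic vectors, as claimed. The step that needs genuine care — and where the real operator-algebraic content of \emph{this} argument sits — is the identification $\mathcal{W}_{NW}(G) \cong \mathfrak{B}(\mathcal{F}_G) \otimes I$ together with the verification that the theorem's notion of a cyclic vector (cyclic simultaneously for every factor algebra) specializes to cyclicity for this single algebra. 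The deeper combinatorial analysis of exactly when such vectors exist is already discharged by the quoted result of Clifton et al., so beyond setting up this reduction there is little left to prove.
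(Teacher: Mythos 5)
Your proposal is correct and takes essentially the same approach as the paper's own proof: the splitting $\mathcal{F}[L^2(\mathbb{R}^3)] \cong \mathcal{F}_G \otimes \mathcal{F}_{G'}$, the identification of $\mathcal{W}_{NW}(G)$ with $\mathfrak{B}(\mathcal{F}_G)\otimes I$, and an appeal to the density clause of the Clifton et al.\ theorem for separable infinite-dimensional factors. If anything, you are more careful than the paper, which abbreviates the local algebra as $\mathfrak{B}(L^2(G))\otimes I$ (strictly it should be $\mathfrak{B}(\mathcal{F}[L^2(G)])\otimes I$) and leaves implicit the points you make explicit, namely the irreducibility argument behind that identification, the hypothesis on the number of tensor factors, and the observation that a vector cyclic in the theorem's simultaneous sense is a fortiori cyclic for the single algebra $\mathcal{W}_{NW}(G)$.
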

\begin{proof}
Recalling equation (\ref{eq:prodstate}), we can write the Fock space as $\mathcal{F}[L^2(\mathbb{R}^3)] \cong \mathcal{F}(G)\otimes \mathcal{F}(G')$ for $G \cup G' = \mathbb{R}^3$. Clearly, $\mathcal{F}(G)$ and $\mathcal{F}(G')$ have infinite dimension. Therefore, the NW algebra $\mathcal{W}_{NW}(G) = \mathfrak{B}(L^2(G)) \otimes I$ local in G has a dense set of cyclic vectors in $\mathcal{F}[L^2(\mathbb{R}^3)]$. $\blacksquare$
\end{proof}
Therefore, the NW localization scheme is still highly non-local since one can still approximate the entire state space via local operations. As we shall see, a slight generalization of the Reeh-Schlieder theorem shows that NW algebras local in a region of spacetime have all the undesirable properties of the standard localization scheme. In the next Section, I will report the Reeh-Schlieder theorem in its most general form\footnote{For the next subsection, cf. Halvorson in \cite{Halvorson2001}, p. 17-21 and \cite{HalvorsonMueger2007}.} and explicate its consequences for the NW localization scheme.
\subsubsection{NW and the Reeh-Schlieder theorem in spacetime}
Recall that our construction of the quantum one-particle systems $(\mathcal{M}_J, D_t)$ and $(L^2(\mathbb{R}^3), U_t)$ essentially corresponded to selecting the subspace of positive-energy solutions to the KG equation. We accomplished this via a positive operator H that gave rise to the complex structure $J$ on $\mathcal{M}$ and an injection from $\mathcal{M}$ to $L^2(\mathbb{R}^3)$. Furthermore, the positivity of H ensured that the a one-parameter group of unitary time evolution on the respective Hilbert spaces had \emph{positive energy}.\\
In the case of a relativistic 4-dimensional spacetime, we need to be slightly more careful with our approach. In this setting, the above requirement amounts to the assumption that the energy of a state should be positive in \emph{every} Lorentz frame. If we consider the representation of the translation group on the automorphism group on our net, we can formulate this requirement as a statement about the spectrum of the generator $\mathbf{P}$ of this representation:\footnote{cf. Halvorson in \cite{HalvorsonMueger2007}.}
\begin{assumption}\emph{(Spectrum condition)}\\
Let $\mathcal{G}$ denote the translation group on Minkowski space and let $\mathfrak{A}$ be a net of local algebras. A representation $\mathbf{x} \rightarrow \alpha_{\mathbf{x}} = e^{i \mathbf{x}\cdot \mathbf{P}}$ of $\mathcal{G}$ on the automorphism group $\text{Aut}\mathfrak{A}$ is said to satisfy the \emph{spectrum condition} if its generator $\mathbf{P}$ is contained in the forward lightcone.
\label{spectrum}
\end{assumption}
This assumption represents to basic assumptions about relativistic spacetimes, namely that (1) there can be no superluminal propagation of physical effects, and (2) that energy is positive.  Thus, we are now in a position to give the full Reeh-Schlieder theorem.
\begin{theorem}\emph{(Reeh-Schlieder)}\\
Let $\mathfrak{A} = \{\mathcal{R}(O): O \subseteq M\}$ denote an additive net of local algebras that satisfies isotony and the spectrum condition. Then, for any open region, the vacuum $\Omega$ is cyclic for $\mathfrak{R}(O)$.
\label{reehschlieder}
\end{theorem}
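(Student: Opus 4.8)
The plan is to prove the Reeh-Schlieder theorem by the classical two-step argument: first showing that the set of vectors annihilated by all local operators acting on $\Omega$ is invariant under the full translation group, and then using the spectrum condition to extend this vanishing property from the local region to all of spacetime via an analyticity argument.

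Let me set this up. Fix an open region $O \subseteq M$ and suppose, for contradiction, that the closure of $\mathfrak{R}(O)\Omega$ is not all of $\mathcal{H}$. Then there exists a nonzero vector $\Psi \in \mathcal{H}$ orthogonal to $\mathfrak{R}(O)\Omega$, i.e. $\langle \Psi, A\Omega\rangle = 0$ for every $A \in \mathfrak{R}(O)$. The goal is to show that $\Psi$ is in fact orthogonal to $\mathfrak{R}(\tilde{O})\Omega$ for \emph{every} open region $\tilde{O}$; by additivity this forces $\Psi$ orthogonal to the whole Hilbert space, hence $\Psi = 0$, a contradiction.

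\begin{proof}
Fix an open region $O \subseteq M$ and suppose $\overline{\mathfrak{R}(O)\Omega} \neq \mathcal{H}$. Choose $0 \neq \Psi \in \mathcal{H}$ with $\langle \Psi, A\Omega \rangle = 0$ for all $A \in \mathfrak{R}(O)$.

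\noindent\textbf{Step 1 (local slack).} Since $O$ is open, there is a smaller open region $O_0$ and a neighbourhood $\mathcal{N}$ of the origin in $\mathcal{G}$ such that $O_0 + \mathbf{x} \subseteq O$ for every $\mathbf{x} \in \mathcal{N}$. By translation covariance (Assumption \ref{ttcovariance}), for $A \in \mathfrak{R}(O_0)$ one has $\alpha_{\mathbf{x}}(A) = e^{i\mathbf{x}\cdot\mathbf{P}} A\, e^{-i\mathbf{x}\cdot\mathbf{P}} \in \mathfrak{R}(O)$ whenever $\mathbf{x} \in \mathcal{N}$. Using translation invariance of the vacuum, $e^{-i\mathbf{x}\cdot\mathbf{P}}\Omega = \Omega$, so for all $\mathbf{x} \in \mathcal{N}$,
\begin{equation}
0 = \langle \Psi, \alpha_{\mathbf{x}}(A)\Omega \rangle = \langle \Psi, e^{i\mathbf{x}\cdot\mathbf{P}} A \Omega \rangle.
\end{equation}

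\noindent\textbf{Step 2 (analytic continuation).} Define $F(\mathbf{x}) = \langle \Psi, e^{i\mathbf{x}\cdot\mathbf{P}} A\Omega \rangle$ for $\mathbf{x} \in M$. The spectrum condition (Assumption \ref{spectrum}) places the joint spectrum of $\mathbf{P}$ in the forward lightcone, so the Fourier representation of $F$ is supported there; this is exactly the hypothesis under which $F$ extends to a function holomorphic in the tube domain $M + i V_{+}$, where $V_{+}$ is the open forward cone, and continuous on its boundary. By Step 1, $F$ vanishes on the open real neighbourhood $\mathcal{N}$. A holomorphic function on a connected tube that vanishes on a real open set of the boundary vanishes identically (the edge-of-the-wedge theorem, applied to the real-analytic boundary values). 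Hence $F \equiv 0$ on all of $M$.

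\noindent\textbf{Step 3 (globalisation).} Setting $\mathbf{x}$ arbitrary in $F \equiv 0$ gives $\langle \Psi, \alpha_{\mathbf{x}}(A)\Omega\rangle = 0$ for every $A \in \mathfrak{R}(O_0)$ and every $\mathbf{x} \in \mathcal{G}$. Thus $\Psi$ is orthogonal to $\mathfrak{R}(O_0 + \mathbf{x})\Omega$ for all translates. By additivity, the algebras $\{\mathfrak{R}(O_0 + \mathbf{x})\}$ generate the global algebra, whose action on $\Omega$ spans a dense subspace of $\mathcal{H}$. Therefore $\Psi \perp \mathcal{H}$, forcing $\Psi = 0$, a contradiction. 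Hence $\overline{\mathfrak{R}(O)\Omega} = \mathcal{H}$, i.e. $\Omega$ is cyclic for $\mathfrak{R}(O)$. $\blacksquare$
\end{proof}

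\noindent\textbf{The main obstacle} is Step 2: rigorously justifying that the spectrum condition yields analyticity of $F$ in the forward tube and that vanishing on a real neighbourhood propagates to identical vanishing. This rests on the multidimensional edge-of-the-wedge theorem, whose analytic machinery is the genuine content of the proof — the covariance and additivity steps are essentially bookkeeping. One must also take care that $\Psi$ may be replaced, via a standard density argument over the several-variable spectral measure, so that the scalar boundary value $F$ is well-defined and has the claimed support property; this is where the positivity of energy in \emph{every} Lorentz frame (rather than merely in one frame) is indispensable.
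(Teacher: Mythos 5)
Your strategy is the classical Reeh--Schlieder argument, and since the paper itself gives no proof (it simply cites Reeh--Schlieder and Haag), the right benchmark is that classical proof --- which your write-up follows except in one essential respect. The genuine gap is in Step 3. What Steps 1--2 actually establish is that $\langle \Psi, \alpha_{\mathbf{x}}(A)\Omega\rangle = 0$ for every \emph{single} $A \in \mathfrak{R}(O_0)$ and every translation $\mathbf{x}$. But the algebra generated by the translates $\{\mathfrak{R}(O_0+\mathbf{x})\}$ consists of weak limits of \emph{products} $\alpha_{\mathbf{x}_1}(A_1)\cdots\alpha_{\mathbf{x}_n}(A_n)$, and orthogonality of $\Psi$ to all vectors of the form $\alpha_{\mathbf{x}}(A)\Omega$ does not imply orthogonality to $\alpha_{\mathbf{x}_1}(A_1)\cdots\alpha_{\mathbf{x}_n}(A_n)\Omega$: your one-variable function $F$ gives no control over such matrix elements, because the intermediate operators block the simple insertion of a single $e^{i\mathbf{x}\cdot\mathbf{P}}$ next to $\Omega$. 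So the conclusion ``therefore $\Psi \perp \mathcal{H}$'' does not follow from what you proved; this is exactly why the textbook proofs run the argument on $n$-point functions rather than on a single operator.

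The repair is standard but is where the real analytic work sits: for each $n$ and $A_1,\dots,A_n \in \mathfrak{R}(O_0)$ consider $G(\mathbf{x}_1,\dots,\mathbf{x}_n) = \langle \Psi, \alpha_{\mathbf{x}_1}(A_1)\cdots\alpha_{\mathbf{x}_n}(A_n)\Omega\rangle$. Writing each $\alpha_{\mathbf{x}_k}(A_k) = e^{i\mathbf{x}_k\cdot\mathbf{P}}A_k e^{-i\mathbf{x}_k\cdot\mathbf{P}}$ and using $e^{-i\mathbf{x}\cdot\mathbf{P}}\Omega = \Omega$, the exponentials combine into factors $e^{i\mathbf{y}_k\cdot\mathbf{P}}$ in the difference variables $\mathbf{y}_k = \mathbf{x}_k - \mathbf{x}_{k-1}$, each sandwiched between operators and acting on a vector whose spectral support lies in the closed forward cone by Assumption \ref{spectrum}; hence $G$ continues holomorphically into the product of forward tubes jointly in all the $\mathbf{y}_k$. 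When every $\mathbf{x}_k$ lies in your neighbourhood $\mathcal{N}$, each translate lies in $\mathfrak{R}(O)$, so the product lies in $\mathfrak{R}(O)$ (it is an algebra) and $G$ vanishes on a real open set; the tube uniqueness argument then gives $G \equiv 0$, i.e.\ $\Psi$ is orthogonal to all product vectors. Only at that point do additivity and the cyclicity of $\Omega$ for the \emph{global} algebra (which you also use tacitly; in the Fock representation it is automatic, since by Proposition \ref{brat2} the global Weyl algebra acts irreducibly) force $\Psi = 0$. With Step 3 replaced by this $n$-variable induction, your proof matches the cited classical argument.
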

\begin{proof}
(see Reeh and Schlieder in their \cite{ReehSchlieder} or Haag in \cite{Haag}). $\blacksquare$
\end{proof}
In the sense that the operator $H = \sqrt{-\nabla^2 +m^2}$ as defined in (\ref{eq:kgop}) is positive, we can regard theorem (\ref{rsr3}) as a special case of this more general version. By construction, the NW localization scheme pertains to spatial sets. If we want to check if the NW scheme is subject to the generalized Reeh-Schlieder theorem, we need to get an idea of what it means for something to be NW-localized in a region of \emph{spacetime}. Following \cite{Halvorson2001}, I shall adopt the ``Heisenberg picture'' where the states on a NW-local algebra $\mathcal{W}_{NW}(G)$ are fixed for all time, but the algebras evolve unitarily via 
\begin{equation}
U^{-1}(t) \mathcal{W}_{NW}(G) U(t),
\end{equation}
which yields an algebra of NW operators localized in $G$ at a time $t$. Particularly, the vacuum vector is the same for all time. Given this notion of a time-evolved NW-local algebra, we can define the algebra of operators NW-local in a region G \emph{for a finite time interval} $\Delta \subseteq \mathbb{R}$:
\begin{equation}
\mathfrak{R}_{\Delta} = \{U^{-1}(t) \mathcal{W}_{NW}(G) U(t): t \in \Delta\}.
\end{equation}
\begin{theorem} $ $\\
For any interval $\Delta = (a,b)$ around $0$, $\Omega$ is cyclic for $\mathfrak{R}_{(a,b)}$.
\end{theorem}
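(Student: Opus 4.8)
The plan is to reduce the statement to a density property of the one-particle space and then run the analyticity mechanism underlying the Reeh--Schlieder theorem. First I would note that, writing $U(t)=\Gamma(e^{-iHt})$ for the second quantization of the one-particle evolution (the choice of sign only swaps the two half-planes below), conjugation acts on Weyl operators by $U^{-1}(t)W(f)U(t)=W(e^{iHt}f)$. Hence $\mathfrak{R}_{(a,b)}$ is the von Neumann algebra generated by the Weyl unitaries $\{W(f): f\in E_{(a,b)}\}$, where $E_{(a,b)}:=\bigcup_{t\in(a,b)} e^{iHt}\,L^2(G)$ is a real-linear subspace of $\mathcal{H}=L^2(\mathbb{R}^3)$. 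By the result of \cite{Petz} already invoked above, applied now to the NW representation, $\Omega$ is cyclic for $\mathfrak{R}_{(a,b)}$ if and only if the complex-linear span of $E_{(a,b)}$ is dense in $\mathcal{H}$. So the whole theorem reduces to proving this one-particle density statement.

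To prove density I would argue by contradiction: suppose $0\neq g\in\mathcal{H}$ is orthogonal to $\overline{\mathrm{span}}_{\mathbb{C}}\,E_{(a,b)}$. Then $\langle e^{-iHt}g,\chi\rangle=0$ for every $\chi\in L^2(G)$ and every $t\in(a,b)$, i.e. the function $e^{-iHt}g$ vanishes almost everywhere on $G$ throughout the interval. The key step is to upgrade this from the finite interval $(a,b)$ to all real times, and this is exactly where the spectrum condition (Assumption \ref{spectrum}), here in the form $\mathrm{spec}(H)\subseteq[m,\infty)$, enters. For fixed $\chi$ the scalar function $F_\chi(t)=\langle e^{-iHt}g,\chi\rangle$ is the Fourier transform of a measure supported on the half-line $[m,\infty)$; by Paley--Wiener it therefore extends to a bounded holomorphic function on the lower half-plane $\{\mathrm{Im}\,t<0\}$, via $F_\chi(u+iv)=\langle e^{-iHu}e^{Hv}g,\chi\rangle$ with $e^{Hv}=e^{-H|v|}$ a contraction for $v<0$. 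Since its boundary values vanish on the interval $(a,b)$, a set of positive Lebesgue measure, boundary uniqueness for Hardy-class functions forces $F_\chi\equiv 0$ on the whole closed lower half-plane.

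It remains to extract a contradiction with anti-locality. Setting $t=-is$ with $s>0$ and writing $g_s:=e^{-Hs}g$, the vanishing of $F_\chi$ gives $g_s|_G=0$; differentiating $F_\chi$ once in $t$ at $t=-is$ gives $(Hg_s)|_G=0$ as well. The semigroup smoothing guarantees $g_s\in\mathrm{dom}(H)$, so both $\mathrm{supp}(g_s)$ and $\mathrm{supp}(Hg_s)$ miss the open set $G$. The anti-locality of $H=\sqrt{-\nabla^2+m^2}$ (the Segal--Goodman Lemma, \cite{SegalGoodman}) then forces $g_s=0$ for every $s>0$; letting $s\to 0^+$ gives $g=\lim_s g_s=0$, contradicting $g\neq 0$. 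Hence $\overline{\mathrm{span}}_{\mathbb{C}}\,E_{(a,b)}=\mathcal{H}$ and $\Omega$ is cyclic for $\mathfrak{R}_{(a,b)}$.

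I expect the genuine obstacle to be the analytic-continuation step of the second paragraph: establishing the holomorphic extension of $F_\chi$ and applying the boundary-uniqueness theorem rigorously, together with the bookkeeping (analytic vectors, domains of $H$) needed to differentiate in $t$. This is precisely the content of the generalized Reeh--Schlieder theorem (\ref{reehschlieder}): indeed one could alternatively present $\mathfrak{R}_{(a,b)}$ as the algebra attached to the spacetime tube swept out by $G$ over $(a,b)$ and quote Theorem \ref{reehschlieder} directly, the point being --- as stressed in the surrounding discussion --- that cyclicity requires only isotony and the spectrum condition and is wholly insensitive to the failure of strong microcausality in the NW scheme.
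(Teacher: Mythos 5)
Your proposal is correct, but it takes a genuinely different route from the paper. The paper's own proof is a two-step sketch at the von Neumann algebra level: it takes cyclicity of $\Omega$ for the all-times algebra $\mathfrak{R}_{\mathbb{R}}$ as ``clear,'' and then invokes Kadison's positive-energy theorem \cite{Kadison} to conclude that $\mathrm{span}_{\mathbb{C}}\overline{\{A\Omega: A\in \mathfrak{R}_{(a,b)}\}} = \mathrm{span}_{\mathbb{C}}\overline{\{A\Omega: A\in \mathfrak{R}_{\mathbb{R}}\}}$, so that the finite time interval already suffices. You instead descend to the one-particle space: Petz's cyclicity criterion \cite{Petz} reduces the theorem to density of the complex span of $\bigcup_{t\in(a,b)} e^{iHt}L^2(G)$ in $L^2(\mathbb{R}^3)$, which you prove by the half-plane analytic continuation afforded by $\mathrm{spec}(H)\subseteq[m,\infty)$, boundary uniqueness, and the Segal--Goodman anti-locality of $H$ \cite{SegalGoodman} applied to $g_s = e^{-Hs}g$ and $Hg_s$. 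What each buys: the paper's argument is short and modular, outsourcing all analyticity to Kadison's operator-algebraic lemma; yours is self-contained, actually \emph{proves} the all-times cyclicity en route (the paper merely asserts it), and makes explicit exactly where the spectrum condition and anti-locality enter --- the same two ingredients that drove Theorem \ref{rsr3}. Two small repairs to your write-up: first, $E_{(a,b)}$ as a union of subspaces is not itself a real-linear subspace; pass to its real-linear span, which generates the same von Neumann algebra because the Weyl relations turn products $W(f)W(g)$ into phases times $W(f+g)$, and then apply Petz. Second, since $e^{Hv}g \to g$ strongly as $v\to 0^-$, your $F_\chi$ is continuous on the \emph{closed} lower half-plane, so Schwarz reflection across the interval $(a,b)$ already forces $F_\chi \equiv 0$; the heavier nontangential Hardy-class boundary-uniqueness machinery you flag as the ``genuine obstacle'' can be avoided entirely. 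Your closing alternative --- realizing $\mathfrak{R}_{(a,b)}$ as the algebra of the spacetime tube swept out by $G$ and quoting Theorem \ref{reehschlieder} --- is in the spirit of the paper's approach, but to use it honestly you would still need to verify that the NW tube net satisfies the hypotheses there (additivity and isotony, not just the spectrum condition, which holds since it is a property of the representation of the translation group rather than of the net).
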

\begin{proof} (Sketch)
Clearly, $\Omega$ is cyclic for $\mathfrak{R}_{\mathbb{R}}$, that is: the vacuum is cyclic for NW algebras local in $G$ for all times. Kadison showed in \cite{Kadison} that since the time-translation group on $\mathcal{F}(L^2(\mathbb{R}^3))$ has positive energy, it follows that 
\begin{equation}
\text{span}_{\mathbb{C}}\overline{\{A\Omega: A \in \mathfrak{R}_{(a,b)}\}} = \text{span}_{\mathbb{C}}\overline{\{A\Omega: A \in \mathfrak{R}_{\mathbb{R}}\}}.
\end{equation} Therefore, $\Omega$ is cyclic for $\mathfrak{R}_{(a,b)}$. $\blacksquare$
\end{proof}
Therefore, the vacuum is cyclic with respect to any NW-algebra local in a spatial set $G$ over a finite time interval. In other words: operations local in a region of minkowski space can have effects at spacelike separation, since they can approximate the entire state space! I shall conclude my argument against the Newton-Wigner localization scheme by showing how it fails to satisfy the necessary condition for the existence of a particle ontology proposed in Section (\ref{fundcrit}).\\
Let $G_1$ and $G_2$ be disjoint spatial sets and define 
\begin{equation}
O_1 = \{G_1 + t: t \in (a,b) \text{ and }G \subseteq \mathbb{R}^3\},
\end{equation} 
\begin{equation}
O_2 = \{G_2 + t: t \in (a,b) \text{ and }G \subseteq \mathbb{R}^3\}
\end{equation} 
such that $O_1$ and $O_2$ are spacelike. Furthermore, define
\begin{equation}
\mathfrak{R}_{NW}(O_i) = \{U^{-1}(t) \mathcal{W}_{NW}(G) U(t): t \in (a,b)\}.
\end{equation}
\begin{proposition}
The $\mathfrak{R}_{NW}(O_i)$ as defined above to \emph{not} satisfy strong microcausality.
\end{proposition}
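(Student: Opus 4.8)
The plan is to exhibit two operators, one in $\mathfrak{R}_{NW}(O_1)$ and one in $\mathfrak{R}_{NW}(O_2)$, whose commutator does not vanish despite $O_1$ and $O_2$ being spacelike separated. The guiding idea is that, although the \emph{equal-time} spatial NW scheme obeys weak microcausality (\ref{wm}) because $L^2(G_1)$ and $L^2(G_2)$ are orthogonal for disjoint $G_1,G_2$, the dynamics are generated by the \emph{anti-local} operator $H=\sqrt{-\nabla^2+m^2}$. Consequently, two NW operators referred to \emph{different} times within $(a,b)$ will generically fail to commute, and this is exactly what breaks strong microcausality.

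First I would pass to the one-particle level. Since $U(t)$ is the second quantization of the one-particle unitary $u_t=e^{itH}$ on $L^2(\mathbb{R}^3)$ and the Weyl operators transform covariantly, one has $U^{-1}(t)W(f)U(t)=W(u_t f)$. Hence a typical generator of $\mathfrak{R}_{NW}(O_1)$ is $W(u_{t_1}f)$ with $f\in L^2(G_1)$ and $t_1\in(a,b)$, and likewise $W(u_{t_2}g)$ with $g\in L^2(G_2)$ and $t_2\in(a,b)$ generates $\mathfrak{R}_{NW}(O_2)$. By the Weyl relations of Proposition (\ref{brat2}), two such operators commute if and only if the antisymmetric form $\mathrm{Im}(u_{t_1}f,u_{t_2}g)_{L^2}$ vanishes; using unitarity of $u_t$ this equals $\mathrm{Im}(f,u_{t_2-t_1}g)_{L^2}$ with $u_{t_2-t_1}=e^{i(t_2-t_1)H}$. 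At equal times this is $\mathrm{Im}(f,g)_{L^2}=0$ (disjoint supports), recovering weak microcausality; the question is whether it can survive at $t_1\neq t_2$.

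The heart of the argument is to show that this imaginary part is nonzero for a suitable choice with $s:=t_2-t_1\neq 0$. Since $L^2(G_1)$ is a \emph{complex} subspace, $\mathrm{Im}(f,e^{isH}g)_{L^2}=0$ for \emph{all} $f\in L^2(G_1)$ would (replacing $f$ by $if$) also force $\mathrm{Re}(f,e^{isH}g)_{L^2}=0$, hence $e^{isH}g\perp L^2(G_1)$, i.e. $e^{isH}g$ vanishes almost everywhere on $G_1$. But the positivity of $H$ (spectrum contained in $[m,\infty)$) makes $s\mapsto e^{isH}g$ the boundary value of a function analytic in a half-plane, and combined with the anti-locality of $H$ this rules out $e^{isH}g$ vanishing on the open set $G_1$ for $s\neq 0$ unless $g\equiv 0$; this is precisely the Hegerfeldt phenomenon of instantaneous spreading of positive-energy wavepackets. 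Therefore $f\in L^2(G_1)$ and $g\in L^2(G_2)$ may be chosen so that $\mathrm{Im}(f,e^{isH}g)_{L^2}\neq 0$.

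It then follows that $A:=W(u_{t_1}f)\in\mathfrak{R}_{NW}(O_1)$ and $B:=W(u_{t_2}g)\in\mathfrak{R}_{NW}(O_2)$ satisfy $[A,B]\neq 0$, so $[\mathfrak{R}_{NW}(O_1),\mathfrak{R}_{NW}(O_2)]\neq\{0\}$ and strong microcausality (\ref{sm}) fails. The routine steps are the covariance $U^{-1}(t)W(f)U(t)=W(u_tf)$ and the Weyl commutation criterion; the main obstacle is the analyticity-plus-anti-locality step establishing that the time-evolved $e^{isH}g$ develops a genuine tail overlapping $G_1$ for $s\neq 0$, which is where the positive-energy (spectrum) condition does the essential work and connects this failure back to the Reeh--Schlieder mechanism of the previous theorem.
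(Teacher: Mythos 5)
Your overall strategy is sound and genuinely different from the paper's. The paper argues indirectly: by the remark preceding the proposition, cyclicity of $\Omega$ for the algebra of the causal complement together with strong microcausality would force $\Omega$ to be \emph{separating} for $\mathfrak{R}_{NW}(O_i)$; but since $L^2(G_i)$ is a complex-linear subspace, the annihilation operators $a(f)$ are affiliated with $\mathcal{W}_{NW}(G_i)$ and hence with $\mathfrak{R}_{NW}(O_i)$, and $a(f)\Omega = 0$ shows $\Omega$ is \emph{not} separating --- so strong microcausality must fail. You instead attack the commutators directly at the one-particle level, which is more informative: it localizes the failure in the non-vanishing of $\mathrm{Im}(f, e^{isH}g)_{L^2}$, and your complex-subspace trick (replacing $f$ by $if$ to upgrade $\mathrm{Im}=0$ to full orthogonality) is exactly right and mirrors the role that complex-linearity of $L^2(G)$ plays in the paper's affiliation argument.

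However, there is a genuine gap in your key lemma as stated. You claim that positivity of $H$ plus anti-locality rule out $e^{isH}g$ vanishing a.e. on the open set $G_1$ for a \emph{single} $s \neq 0$ unless $g \equiv 0$. The analyticity argument does not deliver this: $F(s) = (\chi, e^{isH}g)_{L^2}$ for $\chi \in L^2(G_1)$ is the boundary value of a function analytic and bounded in the upper half-plane (since $H \geq m > 0$), and such boundary values can vanish at a single point --- indeed on any set of measure zero --- without the function vanishing identically; uniqueness theorems of Privalov--Riesz type require vanishing on a boundary set of \emph{positive measure}. Fortunately, your proof does not need the single-$s$ statement. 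Run it by contradiction: if $[\mathfrak{R}_{NW}(O_1), \mathfrak{R}_{NW}(O_2)] = \{0\}$, then, since these algebras are generated by all $W(u_t f)$ with $t$ ranging over the whole interval $(a,b)$, your reduction yields $e^{isH}g \perp L^2(G_1)$ for \emph{every} $s \in (-(b-a),\, b-a)$ --- an interval, hence a set of positive measure. Then $F \equiv 0$ on the real line, Fourier--Stieltjes uniqueness annihilates the spectral measure $(\chi, E_\lambda g)$, so (taking $g$ smooth with compact support in $G_2$, hence in the domain of $H$) both $g$ and $Hg$ vanish a.e. on the open set $G_1$, and the Segal--Goodman anti-locality of $H$ forces $g \equiv 0$, a contradiction. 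With this repair your direct argument is correct; note that it is in effect the same positive-energy analyticity mechanism underlying Kadison's lemma, which the paper invokes only implicitly via its preceding cyclicity theorem, so the two proofs rest on the same foundation deployed at different levels.
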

\begin{remark}
Consider some local algebra $\mathcal{R}(O)$ with $O' \neq \emptyset$. If a vector $\psi$ is cyclic for $\mathcal{R}(O')$, then it must be separating for $\mathcal{R}'(O')$. Strong microcausality entails that $\mathcal{R}(O) \subseteq \mathcal{R}'(O')$. Therefore, $\psi$ is also separating for $\mathcal{A}(O)$. In other words: The cyclicity of a vector for some local algebra implies that $\psi$ is separating for the algebra \emph{if and only if} the algebra satisfies strong microcausality.
\end{remark}
\begin{proof} \emph{(Prop. 4.7)}
Let $\mathfrak{R}_{NW}(O_i)$ as defined above. Recall that, since $L^2(G)$ is a complex-linear subspace of $L^2(\mathbb{R}^3)$, any spatially local NW-algebra is affiliated with a family of creation and annihilation operators. However, by construction, if $\mathcal{W}_{NW}(G_i)$ is affiliated with $a^*(f)$ and $a(f)$, then so is the algebra $\mathfrak{R}_{NW}(O_i)$ and therefore, the vacuum is \emph{not} separating for $\mathfrak{R}_{NW}(O_i)$. It then follows from the above remark that $\mathfrak{R}_{NW}(O_i)$ does not satisfy strong microcausality. $\blacksquare$
\end{proof}
If we were to interpret a NW-local algebra as an algebra of observables measurable in a spatial set $G$, we therefore would be confronted with the problem of spacelike-distant effects of local operations contained $G$ - inconsistent with special relativity. Proponents of the Newton-Wigner representation have contested assumption (\ref{sm}), which I used to give a \emph{fundamentality} criterion, viz. a necessary condition for the existence of a particle ontology. In their paper \cite{FB1999}, Fleming and Butterfield give an argument in support of the NW representation and propose a way to make physical sense of NW operators despite the counterintuitive features shown above. They proceed to introduce Newton-Wigner operators pertaining to the \emph{centre of energy} (CE) or \emph{centre of mass} (CM). These physical quantities, although \emph{localized} in a spatial region, are \emph{not} locally measurable, since any measurement process would not be confined to that region. Thus, the violation of strong microcausality by the associated NW-local algebras could be regarded as a natural consequence of the inherent non-locality of the physical quantities.\\
However, it should not be surprising that algebras of local observables pertaining to non-local physical quantities are themselves not in agreement with locality! In fact, it is not clear how to interpret the notion of a \emph{local} algebra pertaining to \emph{non-local} physical quantities, since it does not make sense to interpret any observable affiliated with a non-local physical quantity to be an element of \emph{any} local algebra! Therefore, I will uphold my definition of \emph{local measurability} given in Section (\ref{fundcrit}), which leads to the following result.
\begin{result}
The Newton-Wigner localization scheme fails to satisfy the fundamentality criterion for the existence of a particle ontology.
\end{result}
This gives us an intermediate conclusion: 
\begin{quote}
Free bosonic QFT does not permit an ontology contingent on the localization of KG-quanta or a property of a system in regions of space or spacetime.
\end{quote}

\section{Particles as an emergent phenomenon}
So far, I have shown that if we accept a set of physical assumptions as given, a free bosonic QFT cannot be interpreted as a theory about \emph{fundamental} particles. However, considering that we do experience particle localization events, and that the particle picture is used frequently in any physical phenomena in QFT, a simple rejection of particles for the theory is generally regarded to be unsatisfactory. As Halvorson and Clifton put it in \cite{HalvorsonClifton2002}, while particle detection events are ``illusory'', the quanta of a quantum field theory can be ``well-enough localized to give the appearance to us (finite observers) that they are strictly localized''. In this Section, I shall pursue the idea of ``approximately localized'' particles that supervene on the fundamental fields of QFT, leaving the confines of a looking for a strict particle ontology.
\subsection{Approximately local algebras}
In terms of abstract algebras, handling approximately local algebras is not straightforward. How can we interpret an observable that is ``approximately'' in a local algebra? What would that mean for nontrivial operations that annihilate the vacuum? In their \cite{HalvorsonClifton2002}, Halvorson and Clifton propose an algebra of observables `FAPP' localized in a region $O$: Given a algebra $\mathfrak{R}(O)$ local in the sense that it satisfies conditions (\ref{isotony}), (\ref{ttcovariance}) and (\ref{sm}). Then, an operator $A'$ is said to be \emph{approximately} local in $O$ if and only if it does not deviate from any $A \in \mathfrak{R}(O)$ by more than a fixed $\delta >0$. More precisely: the algebra of operators \emph{approximately} localized in $O$ is defined as 
\begin{equation} 
	\mathfrak{R}_{\delta}(O) = \{A': A\in \mathfrak{R}(O) \text{ and }\|A-A'\| < \delta \}
	\label{eq:approxalgebra}
\end{equation} 
where $\| \cdot \|$ is the operator norm, which, for every operator $C$ is given by the smallest number $\|C\|$ such that $\|C\|$ is the supremum of all $\|Cx\|$ for every unit vector in $\mathcal{H}$. Then, we would not be able to distinguish a measurement of $A'$ from a measurement of $A$, and therefore the algebra $\mathfrak{R}_{\delta}(O)$ would be, for all practical purposes (`FAPP'), localized in $O$. Unless $\delta = 0$, the Hilbert space $\mathcal{H}$ would then contain no vectors that are separating for the FAPP-localized algebra.
\subsection{A length scale for non-locality} \label{approxlocal}
In his \cite{Wallace2001}, Wallace introduces the notion of \emph{effective} localization for Hilbert space vectors, dependent on a lenght scale. Essentially, the idea is that we can consider a state in a Hilbert space $\mathcal{H}$ to be \emph{effectively} localized around a point in space if it decreases sufficiently fast as the distance from the point increases; this is gauged with respect to some scale $L$ which, in the KG case, turns out to be the Compton wavelength. A superposition of states effectively localized is then also effectively localized in effectively the same region, i.e. in a region that is congruent to the original one up to deviations of the order $L$. If these regions are large compared to the lengthscale $L$ and the Hilbert space is (approximately) preserved by the dynamics of our QFT, then one could define projectors onto the effectively localized states in $\mathcal{H}$. \\
In an algebraic context, this would correspond to defining an algebra of operators that are \emph{effectively} localized in the following sense: Given some GNS representation $(\pi, \mathcal{H}_{\pi})$ of a C*-algebra $\mathfrak{A}$ on the set of bounded linear operators $\mathfrak{B}(\mathcal{H}_{\pi})$, and a unit vector $x \in \mathcal{H}_{\pi}$. Then, a von Neumann subalgebra $\mathfrak{R} \subseteq \mathfrak{A}$ is said to be \emph{effectively} localized in a spatial set $G$ if any state $\Psi$ in the dense set $\{\pi(A)x: A \in \mathfrak{R}\}$ decreases rapidly relative to the appropriate length scale $L$ outside of $G$. It follows that, by the definition of the operator norm given above, one can always find a fixed upper bound $\delta$ in (\ref{eq:approxalgebra}) such that any algebra effectively localized relative to the length scale $L$ will be also be `FAPP'-localized. Conversely, any upper bound $\delta$ of a given `FAPP'-localized algebra gives rise to a length scale for effective localization. Again, this follows from the definition of the operator norm with respect to a suitable GNS unit vector. Thus, `FAPP' localized algebras give rise to effectively localized states in $\mathcal{H}$ and vice versa. \\
Recall that the operator $H = \sqrt{-\nabla^2 + m^2}$ has the positive spectrum $[m, \infty)$. In his paper \cite{Wallace2006}, Wallace shows that the lower bound of this spectrum, the inverse Compton length $L_c = \frac{1}{m}$, is the scale at which we can define \emph{approximate} locality of the KG states - the Compton length representing the wavelength of a field corresponding to the ``rest mass'' of the KG particle. It is at this point that the nature of QFT as a theory of fundamental fields is clearly visible: particular field configurations give rise to a particle-like phenomena - in the form of effectively localized particles of the KG theory, which are merely ``localized blobs of the field'', as Wallace aptly puts it in his \cite{Wallace2006}. From this point of view, our efforts in Sections \ref{locpartont}, \ref{naiveloc} and \ref{NWloc} can be regarded as an attempt to reinterpret a \emph{field} theory in such a way that it permits a fundamental particle interpretation. As we have seen, this not possible. Instead, particles can be interpreted as \emph{supervenient} on fields: Two physical systems with identical field configuration should yield an identical particle configuration. 

\section{Conclusion}
In this essay, I have given two localization schemes for the quanta of the Klein-Gordon field and shown how they fail to produce a particle ontology for the Klein-Gordon QFT. While the Newton-Wigner scheme avoids some of the problems facing standard localization, it still has a number of counterintuitive properties that do not warrant a fundamental interpretation of NW-localization for QFT. Finally, I explained how we can give approximately localized algebras and thus provide states that are effectively localized relative to the Compton wavelength, which can be interpreted as giving rise to particle observation events. Thus, free bosonic QFT is clearly a theory about \emph{fundamental} fields, and particles are only permitted as secondary physical quantities.
\end{document}